\newtheorem{theorem}{Theorem}[section]
\newtheorem{definition}[theorem]{Definition}
\newtheorem{corollary}[theorem]{Corollary}
\newtheorem{proposition}[theorem]{Proposition}
\newtheorem{example}[theorem]{Example}
\newenvironment{defn*}{\begin{definition}}{\end{definition}}
\newenvironment{proof}{\noindent{\bf Proof.}}{\hfill$\blacksquare$}
\newcommand{\vect}[1]{\mathbf{#1}}
 \def\figurename{Fig.}
\begin{document}
\title{Fusion of finite set distributions: Pointwise consistency and global cardinality}

\author{Murat \"{U}ney,~\IEEEmembership{Member,~IEEE}, J\'{e}r\'{e}mie Houssineau, Emmanuel Delande, Simon J. Julier, Daniel Clark
\thanks{This work was supported by the Engineering and Physical Sciences Research
Council (EPSRC) Grant number EP/K014277/1 and the MOD University Defence
Research Collaboration (UDRC) in Signal Processing.}
\thanks{Murat~\"{U}ney is with the Institute for Digital Communications, School of Engineering, University of Edinburgh, EH9 3FB, Edinburgh, UK (e-mail: m.uney@ed.ac.uk).}
\thanks{J\'{e}r\'{e}mie Houssineau is with the National University of Singapore, Department of Statistics and Applied Probability, National University of Singapore, Singapore 119077.}
\thanks{Emmanuel Delande is with the Institute for Computational Engineering and Sciences, University of Texas at Austin (edelande@ices.utexas.edu).}
\thanks{Simon J. Julier is with the Computer Science Department, University College London, London (e-mail: s.julier@ucl.ac.uk).}
\thanks{Daniel E. Clark is with D{\'e}partment CITI, Telecom-SudParis, 9, rue Charles Fourier
91011, EVRY Cedex, France (e-mail: daniel.clark@telecom-sudparis.eu).}
}


%


\maketitle

\begin{abstract}
A recent trend in distributed multi-sensor fusion is to use random finite set filters at the sensor nodes and fuse the filtered distributions algorithmically using their exponential mixture densities (EMDs). Fusion algorithms that extend covariance intersection and consensus based approaches are such examples. In this article, we analyse the variational principle underlying EMDs and show that the EMDs of finite set distributions do not necessarily lead to consistent fusion of cardinality distributions. Indeed, we demonstrate that these inconsistencies may occur with overwhelming probability in practice, through examples with Bernoulli, Poisson and independent identically distributed (IID) cluster processes. We prove that pointwise consistency of EMDs does not imply consistency in global cardinality and vice versa. Then, we redefine the variational problems underlying fusion and provide iterative solutions thereby establishing a framework that guarantees cardinality consistent fusion.
\end{abstract}


\markboth{IEEE Transactions on Aerospace and Electronic Systems}%
{\"{U}ney: Fusion of finite set distributions: Pointwise consistency and global cardinality}

%
\IEEEpeerreviewmaketitle
\begin{IEEEkeywords}
random finite sets, multi-sensor fusion, exponential mixture density, covariance intersection, target tracking
\end{IEEEkeywords}

\section{Introduction}
\label{sec:Introduction}
\IEEEPARstart{I}{n} networked sensing, nodes perform local filtering and exchange filtered distributions as opposed to communicating raw measurements~\cite{Hall2013}. The problem of fusion is to find an estimate for the {\it a posteriori} distribution over some state space conditioned on two or more (conditionally) independent sensor data streams, given local posteriors computed by local filtering of each data stream individually.

A large body of work utilises exponential mixtures of distributions (EMDs) for fusion. These mixtures are found by taking the weighted geometric mean of their components followed by scaling to ensure integration to unity. They have been widely used for fusion of single object (probability) distributions~\cite{Julier2006a}. A well-known algorithm that utilises EMDs of Gaussian densities is covariance intersection~\cite{inproceedings:julier97c}. In covariance intersection (CI), the weights of the components in the mixture are selected using various criteria~\cite{Hurley2002}. The underlying variational problem considers minimising a cost that equals to the weighted sum of Kullback-Leibler divergences~\cite{Cover2006} of the fused density that is sought with respect to the mixture components. The stationary density and set of weights for this problem specifies an EMD which is deemed as a middle-ground of the components in a way analogous to logarithmic opinion pooling of experts~\cite{Heskes1998}.

The EMD form has been adopted for finite set densities in order to address fusion in the case of multiple objects~\cite{Mahler2000a}. Following the introduction of tractable recursive filters~\cite{Mahler2007} such as the probability hypothesis density (PHD) filter~\cite{Mahler2003}, and, explicit filtering algorithms using Gaussian mixture model (GMM) representations~\cite{Vo2006} and sequential Monte Carlo (SMC) techniques~\cite{Vo2005}, numerical algorithms that extend CI fusion to Bernoulli, PHD, and cardinalised PHD (C-PHD) were proposed~\cite{Uney2010, Uney2011,Uney2013}. These methods have been proved useful in improving localisation accuracy in multi-sensor problems including those involving heterogenous sensors~\cite{Barr2013}. 

Another utilisation of EMDs for fusion of finite set distributions has been within the network consensus framework~\cite{Olfati-Saber2007}. Briefly, iterative message passing algorithms which asymptotically compute the equally-weighted mixture, i.e., the (unweighted) geometric mean of the components, at all nodes of a sensor network are proposed for C-PHD~\cite{Battistelli2013}, multi-Bernoulli~\cite{Wang2017}, generalised MB~\cite{Jiang2016,Yi2017}, Bernoulli~\cite{Guldog2014}, and, labelled~\cite{Battistelli2015,Li2018} finite set filters. 

In~\cite{phdthesis:dabak92,Julier2006}, it has been proved that EMDs have a probability density that at no point in the state space overlooks the density of their components. This property is proposed as a working definition of consistency in the context of fusion~\cite{Julier2006}. Finite set density EMDs also satisfy this consistency condition pointwise, at every finite collection of points. 

Finite set distributions, on the other hand, factorise into a cardinality distribution on the number of objects and a localisation density conditioned on the cardinality~\cite{DaleyVere-Jones}. In this article, we show that the cardinality distributions of EMDs are not endowed with such consistency guarantees, in general. Such inconsistencies might result with smaller existence probabilities or estimates on the number of objects when the fused results are used instead of either of the inputs. This phenomena which might undermine the benefits of using diversity in sensing has been empirically observed by other researchers as well (see, e.g.,~\cite{Gunay2016}). Here, we provide explicit mathematical formulae specifying conditions under which the cardinality distributions of finite set EMDs are inconsistent. We demonstrate in examples that these inconsistencies are encountered sometimes with overwhelming probability under typical operating conditions and might lead to large discrepancies in, for example, the estimated number of objects and/or object existence probabilities. 

Based on these results, we argue that the variational problem needs to be decoupled for the cardinality and the localisation distributions (i.e., scaled Janossy~\cite{DaleyVere-Jones} distributions). Doing so separates the fusion of cardinality distributions and localisation terms. This approach results with the same localisation densities as the direct adoption of the variational problem, and, avoids any inconsistencies in the cardinality distribution. We show that pointwise consistency does not imply consistency in cardinality and vice versa. Then, we derive iterative algorithms for cardinality consistent fusion of finite set distributions.

The outline of the article is as follows: In Section~\ref{sec:problemdefinition}, we discuss fusion rules that accommodate EMDs in the light of the associated variational problems and pointwise consistency of EMDs. We provide our results regarding the cardinality inconsistencies of finite set EMDs in Section~\ref{sec:FiniteSetEMDsCardinalityDistributions}, together with examples. Then, we redefine the variational problem underpinning fusion and derive solutions for cardinality consistent fusion in Section~\ref{sec:CardinalityConsistentFusion}. Conclusions and future directions are provided in Section~\ref{sec:Conclusion}.

\section{Fusion as a variational problem}
\label{sec:problemdefinition}
\subsection{EMDs as weighted KLD centroids}
Given two probability density functions (PDFs) $f_i$ and $f_j$ on a state space $\cal X$, let us consider finding another density $f$ in the space of PDFs $\cal P$ such that $f$ captures the information contained in both of the input distributions. An intuitive approach which is geometric in flavour would involve finding the centroid of the input distributions based on a distance/divergence metric. Kullback-Leibler divergence~(KLD) is such a divergence metric which is used  in information geometry in a way similar to the squared Euclidean distance~\cite{Csiszar2004}, and, has an established relevance to estimation when ${\cal X}$ is a finite alphabet (which is often referred to as hypothesis testing)~\cite{Cover2006}.

The KLD of two distributions with densities $f$ and $g$ is computed as
\begin{equation}
 D(f||g) = \int_{\cal X} f(X) \log \frac{f(X)}{g(X)} \mathrm{d}X,
 \label{eqn:KLD}
\end{equation}
where $D$ is always nonnegative and vanishes for $f=g$.

Let us denote the centroid of $f_i$ and $f_j$ with respect to a weighted sum of KLD by $f_\omega$. This distribution is a solution to the associated variational problem given by
\begin{eqnarray}
  \text{(P)} \,\,\,\min_{f \in {\cal P} } J_{\omega}[f] &&  \nonumber \\
   J_{\omega}[f] &\triangleq& (1-\omega) D(f||f_i) + \omega D(f||f_j) \label{eqn:centroid} 
\end{eqnarray}
where $\omega \in [0,1]$ is a design parameter selecting the weight of the divergence of each point $f_i$ and $f_j$ in the space of probability distributions $\cal P$ over $\cal X$, with respect to $f$.

The solution to problem $\text{(P)}$ with the cost \eqref{eqn:centroid} is unique and found as
\begin{eqnarray}
 f_\omega(X) &=& \frac{1}{Z_\omega} f_i^{(1-\omega)}(X) f_j^\omega(X) \label{eqn:emd}\\
 Z_\omega &=& \int_{\cal X} f_i^{(1-\omega)}(X') f_j^\omega(X')\,\,\mathrm{d} X',
 \label{eqn:emdscale}
\end{eqnarray}
which can easily be seen after rearranging the cost in~\eqref{eqn:centroid} as
\begin{equation}
 J_\omega [f] = D(f || f_\omega ) - \log \int_{\cal X} f_i^{(1-\omega)}(X) f_j^\omega(X)\,\,\mathrm{d} X,
 \label{eqn:cost2}
\end{equation}
(see, for example, ~\cite[Eq.(3)]{Csiszar2003}), and, realising that the second term on the right hand side does not depend on $f$ (see Appendix~\ref{sec:statpointsproblemp} for a direct proof). In fact, this term is the scaled R\'enyi divergence~\cite{inproceedings:renyi60} of order $\omega$ from $f_j$ to $f_i$, i.e.,
\begin{eqnarray}
 J_\omega [f] &= &D(f || f_\omega ) - (\omega - 1) R_\omega ( f_j, f_i ),
  \label{eqn:cost3} \\
  R_\omega ( f_j, f_i ) &\triangleq& \frac{1}{\omega-1} \log \int_{\cal X} f_j^\omega(X) f_i^{(1-\omega)}(X) \mathrm{d} X, \notag \\
   &=& \frac{1}{\omega-1} \log Z_\omega. \notag
\end{eqnarray}

Let us consider the weight parameter $\omega$ as a free variable, and find the stationary point of $J_\omega$ in~\eqref{eqn:centroid} with respect to $\omega$ for $f=f_\omega$. For the case, the KLD term in~\eqref{eqn:cost3} vanishes and~\eqref{eqn:centroid} reduces to a cost function for finding the Chernoff information of $f_i$ and $f_j$~\cite{Cover2006} which is concave in $\omega$ \footnote{To be specific, in~\cite{chernoff1952}, Chernoff introduces $C(f_j, f_i) \triangleq -\log \min Z_\omega$ as a ``measure of divergence'' between two distributions. This quantity can equivalently be found by $ \max -\log Z_\omega$ in which the argument of maximisation is nothing but $J_\omega[f=f_\omega]$.}. In~\cite{Julier2006b}, it is explained that there is a unique stationary point $\omega^*$ which satisfies 
\begin{equation}
  D(f_\omega || f_i) = D( f_\omega || f_j)  \bigg\rvert_{\omega=\omega^*}.
 \label{eqn:StationaryOmega}
\end{equation}

The density $f_\omega$ in \eqref{eqn:emdscale} is obtained by normalising the weighted geometric mean of $f_i$ and $f_j$, and, thus  referred to as their geometric mean density (GMD), or, exponential mixture density (EMD). In this article we adopt the latter.

\subsection{Covariance intersection and generalisations}
\label{sec:genci}
The above discussion outlines a fusion algorithm which outputs the pair $(\omega^*,f_{\omega^*})$ using \eqref{eqn:StationaryOmega}~and~\eqref{eqn:emd} for fusing $f_i$ and~$f_j$. This can be rephrased as a $\max \min$ mathematical programme:
\begin{equation}
(\text{P2}) \,\,\,\,(\omega^*,f_{\omega^*}) \triangleq \arg \max_{\omega \in [0,1]} \min_{f \in {\cal P} } J_{\omega}[f].
  \label{eqn:centroid2}
\end{equation}

The input densities here are {\it a posteriori} in nature as they are propagated by local filters, i.e., they are conditioned on the data-streams of sensors $i$ and $j$, respectively. When $\cal X$ is $\mathbb{R}^d$, i.e., the $d$-dimensional space of real vectors, and, the distributions involved are Gaussians, this approach reduces to a set of linear algebraic operations which are known as the ``covariance intersection'' algorithm~\cite{inproceedings:julier97c}. In this setting, how well an approximation the EMD~\eqref{eqn:emd} is to the joint posterior\footnote{Here, we refer to the posterior distribution conditioned on the data streams of both sensors which is infeasible to compute given the limited communication and computational resources of the networked setting.} is studied in terms of bounds over the uncertainty spread characterised by covariance matrices (see, e.g.,~\cite{Chen2002,Reinhardt2015}). An information geometric characterisation of $f_{\omega^*}$ for multivariate Gaussians and other exponential family distributions is provided in~\cite{Nielsen2013} where it is proved that $f_{\omega^*}$ is the unique intersection point of the exponential geodesic curve joining $f_i$~and~$f_j$ (obtained by varying $\omega$ from $0$ to $1$ in~\eqref{eqn:emd}) and its dual hyperplane on the induced statistical manifold.

For general distributions, \eqref{eqn:emd},~\eqref{eqn:emdscale}~and~\eqref{eqn:StationaryOmega} are still valid as a solution to the variational fusion problem in~\eqref{eqn:centroid2}. The optimal weight selected through~\eqref{eqn:StationaryOmega} equates the cost in~\eqref{eqn:centroid} to the Chernoff information~\cite{Cover2006} between $f_i$ and $f_j$~\cite{Julier2006b}. Perhaps for this reason, some authors refer to this fusion rule as Chernoff fusion (see, for example~\cite{Chang2010} and the references therein).

\subsection{Other fusion rules utilising EMDs}
Other fusion methods that use EMDs include consensus based approaches as overviewed in Section~\ref{sec:Introduction}. These methods compute $f_\omega$ by iterative message passings between nodes. However, instead of finding stationary weights of the variational problem in~\eqref{eqn:centroid}, this network averaging approach can compute only an equally weighted EMD, and, when the number of iterations tends to infinity. Some other methods differ from the generalised CI approach described above in their weight selection criteria: Some authors argue that it might be more beneficial to select the value of $\omega$ in \eqref{eqn:centroid} that would maximise the ``peakiness'' of $f_\omega$~\cite{Mahler2000a}, or, to minimise the uncertainty captured by $f_\omega$ quantified by its Shannon differential entropy~\cite{Hurley2002}. 

\subsection{A notion of consistency in Fusion}
The uncertainty spread in EMDs of arbitrary distributions is characterised in terms of pointwise bounds. In~\cite{Julier2006}, the authors show that the scale factor in \eqref{eqn:emdscale} is less than or equal to one,~i.e., $ Z_\omega \leq 1$, and, consequently EMDs \eqref{eqn:emd} satisfy the following consistency condition:
\begin{equation}
 f_\omega(X) \geq \min \{ f_i(X),f_j(X) \}
 \label{eqn:consistency}
\end{equation}
for all points $X \in {\cal X}$ and $\omega \in [0,1]$. In other words, the fused distribution does not overlook the probability mass assigned by $f_i$ and $f_j$ onto the vicinity of any point in the state space. In this sense, this condition corresponds to a notion of consistency~\cite{Julier2006}, in the context of distributed fusion\footnote{Note that the use of the term ``consistency'' here differs from its use in classical statistics.}.

In this article, our concern is the consistency properties of EMDs of finite set distributions. These distributions  have been commonly used to represent multi-object scenes~\cite{Mahler2007}. The following discussion is valid for any fusion scheme that employs EMDs and random finite set (RFS) distributions in order to quantify uncertainty in, for example, ``the number of objects,'' (e.g., Poisson, i.i.d. cluster RFSs~\cite{Mahler2007}), ``existence probabilities'' (e.g., Bernoulli, multi-Bernoulli, generalised labelled MB RFSs~\cite{Vo2013} and MB mixtures~\cite{Williams2015}) irrespective of their weight selection mechanism. In the next section, we utilise~\eqref{eqn:consistency} for analysing finite set EMDs and examine the fused global cardinality distributions for inconsistencies and their consequences in estimating object existence probabilities and/or the number of objects.

\section{Finite set EMDs and cardinality distributions}
\label{sec:FiniteSetEMDsCardinalityDistributions}
In the case of finite set valued random variables, $\cal X$ is the space of finite subsets of $\mathbb{R}^d$ and the density $f$ is a set function characterised by i) a cardinality distribution with probability mass function (pmf) $p(n)$ over natural numbers $n=0,1,\ldots$, and, ii) localisation densities $\rho_n(x_1,...,x_n) $ for ${n=1,2,\ldots}$ which are symmetric in their arguments~\cite{DaleyVere-Jones}. The corresponding density has a set valued argument $X=\{x_1,\ldots,x_n \}$ and is given by
\begin{eqnarray}
 f(X) &=& p\left( n \right) \sum_{\sigma \in \Sigma_n } \rho_{n} (x_{\sigma(1)},...,x_{\sigma(n)}) \notag \\
 &=& p\left(  n \right) n! \rho_{n} (x_{\sigma'(1)},...,x_{\sigma'(n)})
 \label{eqn:RFSdensity}
\end{eqnarray}
where $n=|X|$ and $|.|$ denotes set cardinality. Here, $\Sigma_n$ is the set of all permutations of $(1,\ldots,n)$, and, $\sigma' \in \Sigma_n$ in the last line is an arbitrary permutation which is selected as the identity permutation in the rest of this article.

Note that $p$ in~\eqref{eqn:RFSdensity} sums to one and $\rho_n$s integrate to unity. The finite set density $f$ also integrates to one over $\cal X$,~i.e.,
\begin{equation}
 \int_{\cal X} f(X) \mu(\mathrm{d}X) = 1 \nonumber
\end{equation}
where $\mu$ is an appropriate measure. Let us select $\mu$ as
\begin{equation}
 \mu(\mathrm{d}X) = \sum_{n=0}^\infty \frac{\lambda_n(\mathrm{d}X \cap  {\cal X}_n )}{n!}
 \notag
\end{equation}
where ${\cal X}_n$ is the space of $n$-tuple of points in $\mathbb{R}^d$, and, $\lambda_n$ is the Lebesgue (volume) measure on ${\cal X}_n$~\footnote{Further details on the topic can be found in Section~II.B and Appendix~B in~\cite{Vo2005}, and, the references therein.}. An alternative form of this integral is referred to as the set integral~\cite{Mahler2007}, i.e.,
\begin{equation}
 \int_{\cal X} f(X) \mu(\mathrm{d}X) = \int_{ \mathbb{R}^d} f(X) \delta X, \nonumber
\end{equation}
where the right hand side is the set integral of $f$ defined~as~\footnote{Note that the set integral in~\eqref{eqn:setint} is defined for an arbitrary (measurable) function $f$, but, when $f$ is a finite point process density, \eqref{eqn:setint} is nothing but the total probability theorem applied on~\eqref{eqn:RFSdensity}~\cite{Baccelli2016}.}
\begin{eqnarray}
 \int_{ \mathbb{R}^d} f(X) \delta X 
 &\triangleq& \sum_{n=0}^\infty \frac{1}{n!} \int\limits_{{ \mathbb{R}^d} }\ldots \int\limits_{{\mathbb{R}^d} } f(\{x_{1},...,x_{n}\})\mathrm{d}x_1\dots\mathrm{d}x_n   \nonumber \\[-14pt] \label{eqn:setint} \\ 
 &= &\sum_{n=0}^\infty \int\limits_{{ \mathbb{R}^d} }\ldots \int\limits_{{\mathbb{R}^d} } p(n)\rho_{n} (x_{1},...,x_{n}) \mathrm{d}x_1\dots\mathrm{d}x_n. 
 \nonumber
\end{eqnarray}

Let us consider the EMD of finite set distributions $f_i$ and~$f_j$. For the case~\eqref{eqn:emd} is valid with the scale factor in~\eqref{eqn:emdscale} found using the set integral in~\eqref{eqn:setint}, i.e., 
\begin{equation}
 Z_\omega = \int_{ \mathbb{R}^d} f_i^{(1-\omega)}(X')f_j^\omega(X') \delta X'.
 \label{eqn:rfsemfscale}
\end{equation}
This scale factor is also less than one and consequently the finite set EMD satisfies the pointwise consistency condition in~\eqref{eqn:consistency} for every finite subset $X \subset \mathbb{R}^d$. 

In order to investigate the cardinality distribution of the EMD, let us substitute $f_i$ and $f_j$ in the form given in~\eqref{eqn:RFSdensity} into~\eqref{eqn:rfsemfscale}~and~\eqref{eqn:emd}, and, obtain the finite set EMD as
\begin{equation}
 f_\omega(X) =  p_\omega( n ) n! \rho_{\omega,n}(x_1,\ldots,x_n ) \nonumber
\end{equation}
where the localisation density for cardinality $n$ is  

\begin{multline}
\rho_{\omega,n}(x_1,\ldots,x_n ) \triangleq \frac{1}{z_\omega(n)}\rho_{i,n}^{(1-\omega)}(x_1,\ldots,x_n ) \\ \times \rho_{j,n}^{\omega}(x_1,\ldots,x_n ),
\label{eqn:emdloc}
\end{multline}
\begin{eqnarray}
 z_\omega(n) &=& \int\limits_{\mathbb{R}^d} \cdots \int\limits_{\mathbb{R}^d} \rho_{i,n}^{(1-\omega)}(x'_1,\ldots,x'_n ) \nonumber \\
 && \mspace{10mu} \times \rho_{j,n}^{\omega}(x'_1,\ldots,x'_n ) \mathrm{d}x'_1,\ldots,\mathrm{d}x'_n,
 \label{eqn:zomega}
\end{eqnarray}
and, the cardinality pmf is
\begin{eqnarray}
 p_\omega(n)&=& \frac{1}{N_\omega}p_i^{(1-\omega)}(n)p_j^{\omega}(n)z_{\omega}(n) \label{eqn:emdcard} \\
 N_\omega &=& \sum_{n'=0} p_i^{(1-\omega)}(n')p_j^{\omega}(n')z_{\omega}(n').
 \label{eqn:Nomega}
\end{eqnarray}
where $z_\omega(0)=1$ by convention, and for $n \neq 0$, $ z_\omega(n) < 1$ unless $\rho_{i,n}$ and $\rho_{j,n}$ are identical. The latter is a direct application of  H\"older's inequality (see, e.g., Theorem 188~in~\cite{Hardy1934}). It can be shown similarly that $N_\omega < 1$.

Let us focus on the fused cardinality pmf in \eqref{eqn:emdcard}. This distribution is {\it not an EMD} of the cardinality distributions of the components unlike the fused localisation distributions in~\eqref{eqn:emdloc} that are EMDs of the input localisation densities. In fact, the fused cardinality pmf is the scaled product of the cardinality EMD with the localisation density scale factors $z_\omega(n)$ in~\eqref{eqn:zomega}. As a result, the consistency property of EMDs does not apply to the fused cardinality distribution. Below, we first relate the consistency of the fused cardinality pmf to the sequence of scale factors and give a condition under which the fused cardinality distribution is inconsistent. Then, in the rest of this section, we demonstrate that inconsistent cardinality distributions occur under some typical operating conditions.
\begin{proposition}[Inconsistency in cardinality distribution]
\label{prop:inconsistency}
 Consider the fused cardinality pmf $p_\omega$ in~\eqref{eqn:emdcard},~\eqref{eqn:Nomega}. Consider the following {\it inconsistency} condition for $p_\omega(n)$  obtained by negating the consistency condition:
 \begin{equation}
  p_\omega(n)< \min\{\,\, p_i(n),\, p_j(n) \,\, \}.
  \label{eqn:inconsistency}
 \end{equation}
 This condition holds true if
 \begin{equation}
  z_\omega(n)<\frac{ \sum_{n' \neq n} p_i^{(1-\omega)}(n')p_j^{\omega}(n') z_\omega(n') }{ \frac{ p_i^{(1-\omega)}(n)p_j^{\omega}(n)}{ \min\{\,\, p_i(n),\, p_j(n) \,\, \} }  -  p_i^{(1-\omega)}(n)p_j^{\omega}(n) },
  \label{eqn:condition}
 \end{equation}
where $z_\omega(n)$ is given in \eqref{eqn:zomega}.
\end{proposition}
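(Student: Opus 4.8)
The plan is to show that the displayed bound \eqref{eqn:condition} on $z_\omega(n)$ is nothing more than the inconsistency condition \eqref{eqn:inconsistency} rearranged so as to isolate $z_\omega(n)$. Since every step of the rearrangement is reversible, establishing the bound is actually equivalent to establishing inconsistency, so in particular it is sufficient, which is what the proposition asserts. To keep the algebra transparent I would first abbreviate the two quantities that recur throughout: write $c_{n} \triangleq p_i^{(1-\omega)}(n) p_j^{\omega}(n)$ for the unnormalised cardinality EMD factor and $m_n \triangleq \min\{p_i(n),p_j(n)\}$ for the right-hand side of \eqref{eqn:inconsistency}. With this notation, \eqref{eqn:emdcard}--\eqref{eqn:Nomega} read $p_\omega(n) = c_n z_\omega(n)/N_\omega$ with $N_\omega = \sum_{n'} c_{n'} z_\omega(n')$, where every summand is nonnegative and $N_\omega>0$.

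First I would substitute this expression for $p_\omega(n)$ into \eqref{eqn:inconsistency} and clear the positive denominator $N_\omega$, obtaining $c_n z_\omega(n) < m_n N_\omega$. Next I would split the normalisation sum into its diagonal term and the remainder, $N_\omega = c_n z_\omega(n) + \sum_{n'\neq n} c_{n'} z_\omega(n')$, substitute, and collect on the left the two terms that carry $z_\omega(n)$. This produces $c_n z_\omega(n)\,(1-m_n) < m_n \sum_{n'\neq n} c_{n'} z_\omega(n')$. The final step is to divide through by $c_n(1-m_n)$ to isolate $z_\omega(n)$, and then to rewrite the denominator $c_n(1-m_n)/m_n$ as $c_n/m_n - c_n$ so that the expression matches \eqref{eqn:condition} verbatim.

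The one point requiring care — and what I would flag as the main obstacle, though it is a minor one — is the sign of the divisor $c_n(1-m_n)$: the direction of the inequality is preserved only when this quantity is strictly positive, which holds precisely when $p_i(n)>0$ and $p_j(n)>0$ (so that $c_n>0$) and $m_n<1$. The excluded cases are benign and I would dispose of them separately: if $c_n=0$ then $m_n=0$, and \eqref{eqn:inconsistency} would demand $p_\omega(n)<0$, which is impossible, so inconsistency cannot arise; and $m_n=1$ forces both input cardinality pmfs to be degenerate Kronecker masses at $n$, whence $p_\omega(n)=1=m_n$ and again \eqref{eqn:inconsistency} fails. Apart from isolating and dismissing these degeneracies, the argument is a single reversible rearrangement: reading the chain upward from the $z_\omega(n)$-bound delivers the claimed ``if'' implication, while reading it downward shows that the condition is in fact also necessary.
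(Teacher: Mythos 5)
Your proof is correct and takes essentially the same route as the paper's: the paper's Appendix proof likewise decomposes $N_\omega$ into the $n$th term plus $\sum_{n'\neq n} p_i^{(1-\omega)}(n')p_j^{\omega}(n')z_\omega(n')$, substitutes into \eqref{eqn:inconsistency}, and solves the resulting inequality for $z_\omega(n)$ to obtain \eqref{eqn:condition}. Your additional bookkeeping---the positivity condition $p_i^{(1-\omega)}(n)p_j^{\omega}(n)>0$ and $\min\{p_i(n),p_j(n)\}<1$ needed for the division, the dismissal of the degenerate cases, and the observation that the steps are reversible so the condition is also necessary---is sound and merely makes explicit what the paper leaves implicit.
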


The proof is given in Appendix~\ref{sec:proof}. The above proposition points out that the fused cardinality distribution opts to disagree with local results on the probability of number of objects when the $n$th localisation scale $ z_\omega(n)$ is comparably small. This is in stark contrast with the fused localisation densities in~\eqref{eqn:emdloc} which always satisfy the consistency condition
\begin{equation}
 \rho_{\omega,n}(x_1,\ldots,x_n ) \geq \min \{\rho_{i,n}(x_1,\ldots,x_n ),\rho_{j,n}(x_1,\ldots,x_n )  \}
 \nonumber
\end{equation}
for all $x_1,\ldots,x_n \in \mathbb{R}^d$ and $n$, as they are EMDs.

The scale factors modulating the cardinality pmf, i.e., $z_\omega(n)$, are found by taking the inner products of the input localisation densities raised to fractional powers. As explained above, these terms are upper bounded by one with $z_\omega(n)$ equaling unity only when $\rho_{i,n}$ and $\rho_{j,n}$ are equal (see~\cite{Julier2006} for an alternative proof). In fusion networks, however, one of the main goals is to benefit from sensing diversity which means $\rho_{i,n}$ and $\rho_{j,n}$ will have a comparably small overlap in their confidence regions. As a result, much smaller $z_\omega(n)$ values should be expected in typical operating conditions.

Now, let us consider some particular RFS families and demonstrate the consequences of Proposition~\ref{prop:inconsistency}.

\subsection{Bernoulli finite set EMDs and fused existence probabilities}
\label{sec:Bernoulli}
Bernoulli finite set distributions select at most one object from a population. Collections, and mixtures thereof are used to represent multi-object models the fusion of which reduces to EMD fusion of Bernoulli pairs~(see, e.g., \cite{Jiang2016,Yi2017}). For a Bernoulli finite set, the cardinality pmf in~\eqref{eqn:RFSdensity} is given by
\begin{equation}
 p(n) = \begin{cases}
    1-\alpha,& \,  n= 0,\\
    \alpha,& \,    n=1,\\
    0,& \,    \text{otherwise}
    \end{cases}
    \label{eqn:BernoulliCardinality}
\end{equation}
where the parameter $\alpha$ is referred to as {\it the existence probability} of the object modelled. 

There is also a single localisation density $\rho_n$ for $n=1$ which we will denote by $\rho$. Therefore, given two Bernoullis $f_i=( \alpha_i, \rho_i )$ and $f_j=( \alpha_j, \rho_j )$, the sequence $z_\omega(n)$ reduces~to
\begin{equation}
 z_\omega(n) = \begin{cases}
    1,& \,  n= 0,\\
     z_\omega \triangleq \int_{\mathbb{R}^d} \rho_i^{(1-\omega)}(x)\rho_j^\omega(x)\mathrm{d}x ,& \,    n=1,\\
    0,& \,    \text{otherwise}.
    \end{cases} 
    \label{eqn:zomegabernoulli}\\
\end{equation}

\begin{corollary}
 \label{ref:corbernoulli}
 The inconsistency condition given by Proposition~\ref{prop:inconsistency} for Bernoulli finite set distributions reduces to that the existence probability of the EMD given by~\cite{Uney2013}
\begin{equation}
 \alpha_\omega = \frac{ \alpha_i^{(1-\omega)}\alpha_j^{\omega}z_\omega }
 { (1-\alpha_i)^{(1-\omega)} (1-\alpha_j)^\omega + \alpha_i^{(1-\omega)} \alpha_j^\omega z_\omega }
 \label{eqn:BernoulliEMDexisprob}
\end{equation}
is smaller than either of $\alpha_i$ or $\alpha_j$ if
\begin{equation}
 z_\omega < \frac{  (1-\alpha_i)^{(1-\omega)} (1-\alpha_j)^\omega  }{ \alpha_i^{(1-\omega)} \alpha_j^\omega/\min\{\,\alpha_i\,,\,\alpha_j\,\} - \alpha_i^{(1-\omega)} \alpha_j^\omega  }.
 \nonumber
\end{equation}
\end{corollary}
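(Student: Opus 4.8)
The plan is to obtain Corollary~\ref{ref:corbernoulli} as a direct specialisation of Proposition~\ref{prop:inconsistency} to the Bernoulli case, so that no new machinery is required: I would substitute the Bernoulli cardinality pmf~\eqref{eqn:BernoulliCardinality} and the reduced scale-factor sequence~\eqref{eqn:zomegabernoulli} into the general expressions~\eqref{eqn:emdcard},~\eqref{eqn:Nomega} and~\eqref{eqn:condition}. The single simplification driving everything is that for a Bernoulli every sum over cardinality collapses onto the two terms $n'=0$ and $n'=1$, with $z_\omega(0)=1$, $z_\omega(1)=z_\omega$, and $p_i(n')=p_j(n')=0$ for $n'\geq 2$.

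First I would confirm the stated form of the fused existence probability $\alpha_\omega$. Evaluating the normaliser~\eqref{eqn:Nomega} for the Bernoulli retains only $n'=0$ and $n'=1$, giving $N_\omega=(1-\alpha_i)^{(1-\omega)}(1-\alpha_j)^{\omega}+\alpha_i^{(1-\omega)}\alpha_j^{\omega}z_\omega$. Identifying the fused existence probability with the fused probability of exactly one object, i.e. $\alpha_\omega\triangleq p_\omega(1)$, and inserting $p_i(1)=\alpha_i$, $p_j(1)=\alpha_j$ and $z_\omega(1)=z_\omega$ into~\eqref{eqn:emdcard} then reproduces~\eqref{eqn:BernoulliEMDexisprob} verbatim.

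Next I would specialise the sufficient condition~\eqref{eqn:condition} to $n=1$. The numerator $\sum_{n'\neq 1} p_i^{(1-\omega)}(n')p_j^{\omega}(n')z_\omega(n')$ keeps only the $n'=0$ term, which equals $(1-\alpha_i)^{(1-\omega)}(1-\alpha_j)^{\omega}$ because $z_\omega(0)=1$. The denominator becomes $\alpha_i^{(1-\omega)}\alpha_j^{\omega}/\min\{\alpha_i,\alpha_j\}-\alpha_i^{(1-\omega)}\alpha_j^{\omega}$ after substituting $p_i^{(1-\omega)}(1)p_j^{\omega}(1)=\alpha_i^{(1-\omega)}\alpha_j^{\omega}$ and $\min\{p_i(1),p_j(1)\}=\min\{\alpha_i,\alpha_j\}$. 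This is precisely the bound on $z_\omega$ asserted in the corollary, so the argument is essentially bookkeeping once Proposition~\ref{prop:inconsistency} is in hand.

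The only point requiring a little care --- rather than a genuine obstacle --- is to justify that $n=1$ is the relevant cardinality to test. Since a Bernoulli is supported on $\{0,1\}$, the fused pmf satisfies $p_\omega(0)=1-\alpha_\omega$, so any deficit $p_\omega(1)<\min\{\alpha_i,\alpha_j\}$ forces $p_\omega(0)>\max\{1-\alpha_i,1-\alpha_j\}\geq\min\{1-\alpha_i,1-\alpha_j\}$; hence the complementary mass can never violate consistency, and the existence probability at $n=1$ is the sole carrier of any cardinality inconsistency. I would close by noting that the right-hand side of the bound is positive exactly when $\min\{\alpha_i,\alpha_j\}<1$, which confirms that the condition is non-vacuous whenever the inputs are not certain of existence.
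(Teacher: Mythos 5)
Your proof is correct and takes essentially the same route as the paper: the paper's own proof is exactly the substitution of the Bernoulli sequence~\eqref{eqn:zomegabernoulli} into Proposition~\ref{prop:inconsistency}, in particular into~\eqref{eqn:inconsistency}~and~\eqref{eqn:condition}, which you carry out in full (the collapse of all sums onto $n'=0$ and $n'=1$, the identification $\alpha_\omega = p_\omega(1)$, and the evaluation of numerator and denominator of~\eqref{eqn:condition} at $n=1$). The only blemish is your closing side remark: the right-hand side of the bound is positive exactly when $\max\{\alpha_i,\alpha_j\}<1$, not $\min\{\alpha_i,\alpha_j\}<1$, since the numerator $(1-\alpha_i)^{(1-\omega)}(1-\alpha_j)^{\omega}$ vanishes if either input is certain of existence; this does not affect the corollary or your proof of it.
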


The proof follows from substituting the sequence~\eqref{eqn:zomegabernoulli} in~Proposition~\ref{prop:inconsistency}, and, in particular in~\eqref{eqn:inconsistency}~and~\eqref{eqn:condition}. This condition is very often satisfied in sensing applications as explained before. For example, if $\alpha_i$ and $\alpha_j$ are equal, then this condition reduces to $z_\omega < 1$ which always holds  for all practical purposes as $\rho_i$ and $\rho_j$ should not be expected to be identical. For  $\alpha_i \neq \alpha_j$, this inconsistency still occurs with overwhelming probability in Bernoulli fusion which is demonstrated in the following example. 
\begin{figure}[b!]
\vspace{-5pt}
  \centering{
  \begin{minipage}{\linewidth}
    \includegraphics[width=0.19\linewidth]{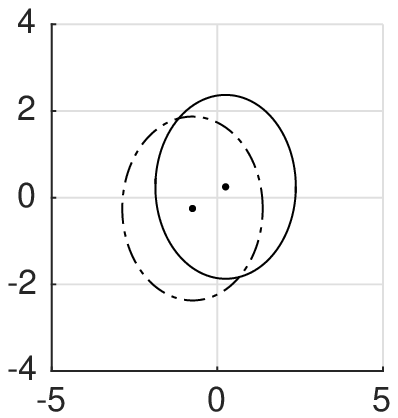}
    \includegraphics[width=0.19\linewidth]{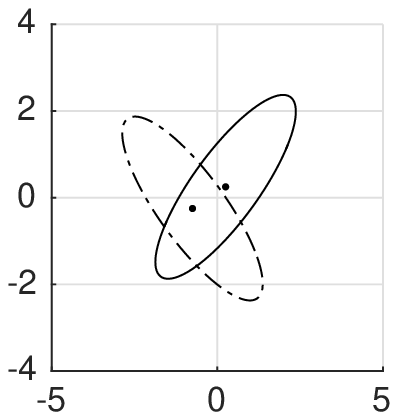}
    \includegraphics[width=0.19\linewidth]{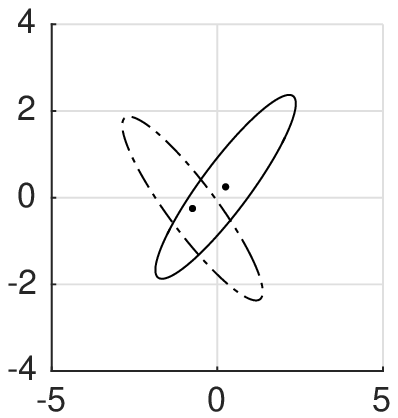}
    \includegraphics[width=0.19\linewidth]{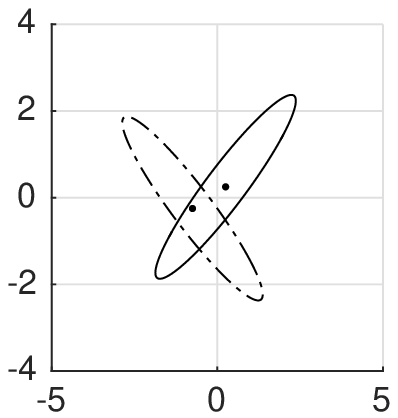}
    \hfill
    \includegraphics[width=0.19\linewidth]{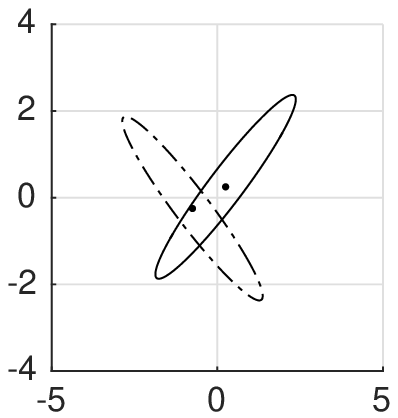}
  \end{minipage}
  \vspace{-4pt}
  }
  \caption[Gaussians]{Localisation densities of the Gauss-Bernoulli finite sets, i.e., $\rho_i$ (solid line)~and~$\rho_j$ (dash-dotted line), in Example~\ref{ex:GaussBernoulli} for increasing sensing diversity as the covariance condition number is increased as $\kappa=1,10,20,30,40$ (left to right).}
  \label{fig:gausses}
\end{figure}
\begin{example}[Gauss-Bernoulli EMDs]
\label{ex:GaussBernoulli}
\begin{figure}[t!]
\vspace{-5pt}
  \centering{
  \includegraphics[width=\linewidth,height=36mm]{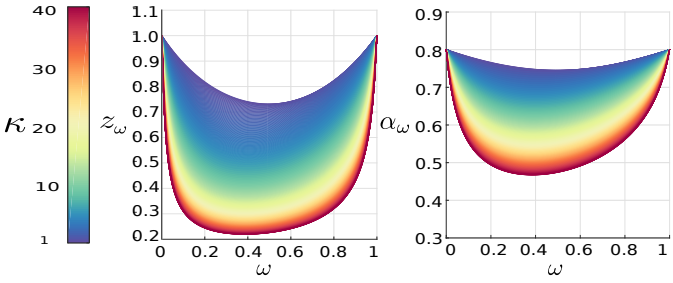}
  \vspace{-20pt}
  }
  \caption[Gauss-Bernoulli results]{The scale factor (left) and the fused existence probabilities in the Gauss-Bernoulli Example~\ref{ex:GaussBernoulli}. These quantities are calculated for varying sensing diversity $\kappa$ (equivalently, the covariance condition number of the Gaussians) and (exponential) mixture weight $\omega$ values. }
  \label{fig:gbcombo}
\end{figure}
Let us consider Bernoulli distributions with Gaussian localisation densities given~by
\begin{equation}
 \rho_i(x) = {\cal N}(x; \vect{m}_i,\vect{C}_i ),\,\,\,\,\rho_j(x) = {\cal N}(x; \vect{m}_j,\vect{C}_j ),
\label{eqn:GaussianLocs}
\end{equation}
where $\vect{m}$ is the mean vector and $\vect{C}$ is the covariance matrix. The fused localisation density $\rho_\omega$ for the case is a Gaussian with mean and covariance given by
\begin{eqnarray}
\vect{m}_\omega &= &\vect{C}_\omega\big((1-\omega)\vect{C}_i^{-1}\vect{m}_i + \omega \vect{C}_j^{-1}\vect{m}_j \big)
\label{eqn:momega} \\
\vect{C}_\omega &=& \left( (1-\omega)\vect{C}_i^{-1} + \omega \vect{C}_j^{-1} \right)^{-1}.
\label{eqn:Comega}
\end{eqnarray}
 The scale factor $z_\omega$ is found using integration rules for Gaussians as
\begin{multline}
\mspace{-20mu} z_\omega = \frac{ \left| \vect{\vect{C}}^{-1}_i \right|^{ (1-\omega)/2 } \left| \vect{C}^{-1}_j\right|^{\omega/2} } { \left| \vect{C}_\omega \right|^{1/2}  } \exp \left\{ -\frac{1}{2}\bigg( (1-\omega)\vect{m}_i^T \vect{\vect{C}}_i^{-1}\vect{m}_i \bigg. \right. \\
\left. \bigg. + \omega \vect{m}_j^T\vect{\vect{C}}_j^{-1}\vect{m}_j
- \vect{m}_\omega^T\vect{C}_\omega^{-1}\vect{m}_\omega \bigg)\right\}.
\label{eqn:GaussEMDscale}
\end{multline}

Let us consider two Bernoullis with existence probabilities $\alpha_i = \alpha_j = 0.8$ with localisation densities of mean vectors $\vect{m}_i = [0.25,0.25 ]^T$ and $\vect{m}_j=[-0.75,-0.25 ]^T$, respectively, where $(.)^T$ denotes vector transpose. We select the covariance matrices as rotated versions of a diagonal covariance given by  
\begin{eqnarray}
 \vect{C}_i &=  &\vect{R}(\pi/4) \vect{\Sigma}\vect{R}^T(\pi/4), \notag \\
 \vect{C}_j &=  &\vect{R}(-\pi/4) \vect{\Sigma}\vect{R}^T(-\pi/4), \notag \\
   \vect{\Sigma}&=&\left[ \begin{array}{cc} \sigma_1^2,& 0 \\
   0,&\sigma_2^2 \end{array} \right], \notag \\
   \vect{R}(\phi) &=& \left[ \begin{array}{rr} \cos\phi,& -\sin \phi \\
   \sin \phi,& \cos \phi \end{array} \right].
   \notag
\end{eqnarray}
This covariance structure is typical with sensors placed at different positions and taking their measurements from different aspect angles of the surveillance zone. The condition number of $\Sigma$ -- equivalently, $\vect{C}_i$ and $\vect{C}_j$ -- is given by $\kappa=\sigma_1^2/\sigma_2^2$ and has higher values for sensors with range/cross-range ambiguity such as cameras/radars. We vary this quantity from $\kappa=1$ to $40$. \figurename~\ref{fig:gausses} depicts the uncertainty ellipses of sample Gaussians by using three times the standard deviation along the eigen vector directions.

The behaviours of the fused existence probability in \eqref{eqn:BernoulliEMDexisprob} and the scale factor in~\eqref{eqn:GaussEMDscale} are our concern. \figurename~\ref{fig:gbcombo} presents both the $z_\omega$ and $\alpha_\omega$ values obtained by varying the condition number $\kappa$ with small steps from $1$ to $40$ hence increasing the sensing diversity. The exponential mixture weights $\omega$ take values from a dense grid over $[0,1]$. As pointed out in this section, the scale factor values are always smaller than unity, and, can often take very small values. The scale factor monotonically decreases with $\kappa$ which controls the sensing diversity. It is convex with respect to the mixture weight $\omega$, as pointed out in Section~\ref{sec:problemdefinition}.

The fused existence probabilities given in \figurename~\ref{fig:gbcombo} demonstrate the inconsistency in cardinality. In this example, this quantity is always smaller than the input existence probabilities admitting inconsistency for all selections of $\kappa$~and~$\omega$. Moreover, the fused existence probability drops below $0.5$ for large values of the sensing diversity parameter~$\kappa$. This threshold is often used as the Bayesian decision boundary for detection and despite that the input sources are fairly confident on the existence of an object with existence probabilities of $\alpha_i = \alpha_j = 0.8$, detection might be missed if based on the fused result instead, thereby undermining the benefits of sensing diversity. As a result, the inconsistency in cardinality may lead to inconsistency in decision making when EMDs of finite set distributions are used.{\hfill$\blacksquare$}
\end{example}

\subsection{EMDs of Poisson finite set distributions}
\label{sec:Poisson}
Poisson finite set densities are capable of representing many objects and underpin popular multi-object filters such as the PHD filter~\cite{Mahler2003}. Their cardinality pmf in~\eqref{eqn:RFSdensity} is given by a Poisson distribution, i.e.,
\begin{equation}
 p(n) = \frac{ \mathrm{e}^\lambda \lambda^n }{n!}
 \label{eqn:Poissonpmf}
\end{equation}
where $\lambda$ is the expected number of objects. The localisation densities factorise over the density for $n=1$~as
\begin{equation}
 \rho_{n}(x_1,\ldots,x_n) = \prod_{i=1}^{n} \rho_1(x_i),
 \label{eqn:Poissonloc}
\end{equation}
making it possible to parameterise the entire finite set distribution with a scalar and a single density\footnote{We drop the subscript in $\rho_1$ for the rest of this subsection and denote it by~$\rho$.}.

For two Poissons $f_i=( \lambda_i, \rho_i )$ and $f_j=( \lambda_j, \rho_j )$, the sequence $z_\omega(n)$ is a geometric sequence found by subsituting from~\eqref{eqn:Poissonloc} for both $i$~and~$j$ into~\eqref{eqn:zomega}. This sequence is found~as
\begin{eqnarray}
z_\omega(n) &=& z_\omega^n  \label{eqn:zomegaPoisson} \\
  z_\omega &\triangleq& z_\omega(1) = \int_{{\mathbb{R}^d}} \rho_i^{(1-\omega)}(x')\rho_j^\omega(x')\mathrm{d}x,
  \label{eqn:commonratio}
\end{eqnarray}
where $z_\omega<1$ unless $\rho_i$ and $\rho_j$ are identical, as aforementioned.

The expected number of objects with respect to an EMD with weight parameter $\omega$ is given by~\cite{Uney2013}
\begin{equation}
 \lambda_\omega = \lambda_i^{(1-\omega)}\lambda_j^\omega  z_\omega.
 \label{eqn:PoissonEMDLambda}
\end{equation}
\begin{proposition}[Poisson inconsistency in expectation]
\label{prop:PoissonInconsistency}
Let us consider an inconsistency condition for Poisson cardinality distributions in terms of their expectations:
 \begin{equation}
 \lambda_\omega < \min\{ \lambda_i, \lambda_j \}.
 \label{eqn:PoissonIcondition}
\end{equation}

 This condition holds whenever
\begin{equation}
 z_\omega < \frac{\min\{ \lambda_i, \lambda_j \}}{\max\{ \lambda_i, \lambda_j \}}.
 \label{eqn:PoissonIC}
\end{equation}
\end{proposition}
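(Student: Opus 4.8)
The plan is to substitute the closed form of the fused Poisson rate from \eqref{eqn:PoissonEMDLambda} into the inconsistency condition \eqref{eqn:PoissonIcondition} and reduce the claim to an elementary bound on a weighted geometric mean. Writing $G \triangleq \lambda_i^{(1-\omega)}\lambda_j^{\omega}$, the fused expectation is $\lambda_\omega = G\,z_\omega$, so the target inequality \eqref{eqn:PoissonIcondition} is equivalent to $z_\omega < \min\{\lambda_i,\lambda_j\}/G$. The whole proof therefore comes down to comparing the right hand side of this equivalent inequality with the threshold $\min\{\lambda_i,\lambda_j\}/\max\{\lambda_i,\lambda_j\}$ supplied by the hypothesis \eqref{eqn:PoissonIC}.

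The key observation is that $G$ is the weighted geometric mean of $\lambda_i$ and $\lambda_j$ with weights $1-\omega$ and $\omega$ summing to one. Taking logarithms, $\log G = (1-\omega)\log\lambda_i + \omega\log\lambda_j$ is a convex combination of $\log\lambda_i$ and $\log\lambda_j$ for $\omega \in [0,1]$, and hence lies between $\log\min\{\lambda_i,\lambda_j\}$ and $\log\max\{\lambda_i,\lambda_j\}$. Exponentiating yields $\min\{\lambda_i,\lambda_j\} \leq G \leq \max\{\lambda_i,\lambda_j\}$. First I would use the upper bound $G \leq \max\{\lambda_i,\lambda_j\}$ and positivity of all quantities to deduce $\min\{\lambda_i,\lambda_j\}/G \geq \min\{\lambda_i,\lambda_j\}/\max\{\lambda_i,\lambda_j\}$. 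Then, invoking the hypothesis $z_\omega < \min\{\lambda_i,\lambda_j\}/\max\{\lambda_i,\lambda_j\}$ and chaining the two inequalities gives $z_\omega < \min\{\lambda_i,\lambda_j\}/G$. Multiplying through by $G > 0$ gives $\lambda_\omega = G z_\omega < \min\{\lambda_i,\lambda_j\}$, which is precisely \eqref{eqn:PoissonIcondition}.

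I do not anticipate a genuine obstacle here; the only nontrivial ingredient is the weighted-geometric-mean bound $G \leq \max\{\lambda_i,\lambda_j\}$, which is elementary (and consistent with the role $z_\omega \le 1$ already plays in the excerpt). The main point requiring care is that the hypothesis is a \emph{sufficient}, not necessary, condition: the bound $G \leq \max\{\lambda_i,\lambda_j\}$ is loose whenever $\omega$ is interior to $[0,1]$, so the stated threshold $\min/\max$ is conservative and the true inconsistency region in $z_\omega$ is in general larger. I would close by remarking that this reproduces, for the Poisson expectation, the mechanism of Proposition~\ref{prop:inconsistency}: a small localisation overlap $z_\omega$ suppresses the fused rate below both inputs irrespective of how the weighted geometric mean $G$ interpolates between $\lambda_i$ and $\lambda_j$.
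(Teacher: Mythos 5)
Your proposal is correct and follows essentially the same route as the paper, which simply states that the result follows by substituting \eqref{eqn:PoissonIC} into \eqref{eqn:PoissonEMDLambda} and \eqref{eqn:PoissonIcondition}. The only thing you add is an explicit statement of the elementary weighted-geometric-mean bound $\lambda_i^{(1-\omega)}\lambda_j^{\omega} \leq \max\{\lambda_i,\lambda_j\}$, which the paper's one-line proof leaves implicit; this is a useful clarification but not a different argument.
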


The proof follows easily from substituting~\eqref{eqn:PoissonIC} in~\eqref{eqn:PoissonEMDLambda}~and~\eqref{eqn:PoissonIcondition}. It is instructive to contrast this result with Proposition~\ref{prop:inconsistency}. The latter holds for any class of finite set densities and considers their cardinality distributions for different $n$. The above result is on {\it the expected value} of $n$ in Poisson finite set densities. The condition in~\eqref{eqn:PoissonIC} is satisfied with overwhelming probability in practice leading to inconsistencies as observed, for example, in~\cite{Gunay2016}. For example, for $\lambda_i=\lambda_j=\lambda$, this reduces to the common ratio $z_\omega$ being less than one which should --as previously discussed-- always be expected to be the case in practice. 

The inconsistency in decision making for the case is related to the estimation of the number of objects. In Poisson finite set models, the minimum mean squared error (MMSE) estimation principle is used which leads to the use of $\lambda$ as the estimated number of objects\footnote{{\it Maximum a posteriori} (MAP) estimation is not used with Poisson cardinality distributions as~\eqref{eqn:Poissonpmf} is not guaranteed to have a unique maximum. Notice that, for example,~\eqref{eqn:Poissonpmf} evaluates at the same value for both $n=0$ and $n=1$ for $\lambda=1$.}. As a result, the EMD density always underestimates the number of objects despite that the source densities might be consistently suggesting otherwise,  in practice. The magnitude of the error stemming from this bias depends on the value of $z_\omega$.

\subsection{EMDs of IID cluster finite set densities}
\label{sec:iid}
IID cluster finite set distributions relax the Poisson cardinality pmf in~\eqref{eqn:Poissonpmf} and take arbitrary cardinality pmfs underpinning the C-PHD filter~\cite{Mahler2007a}. The localisation densities still take the factorised form in~\eqref{eqn:Poissonloc} leading to the identical geometric series $z_\omega(n)$ in~\eqref{eqn:zomegaPoisson}. For the case, Proposition~\ref{prop:inconsistency} specialises as follows:
\begin{figure}[t!]
  \centering{
  \subfloat[]{ \includegraphics[width=0.49\linewidth]{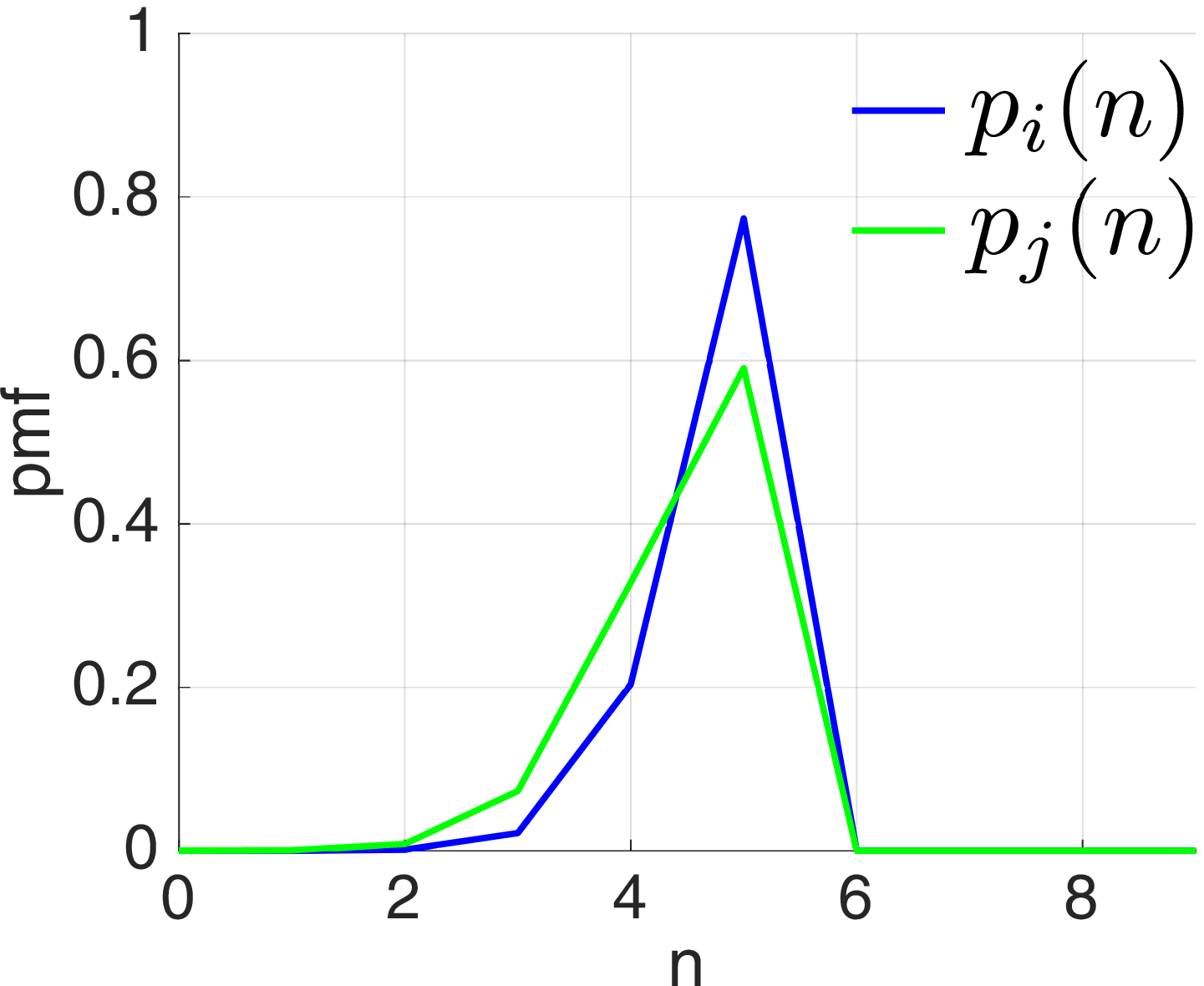} \label{fig:BB} }
  \subfloat[]{ \includegraphics[width=0.49\linewidth]{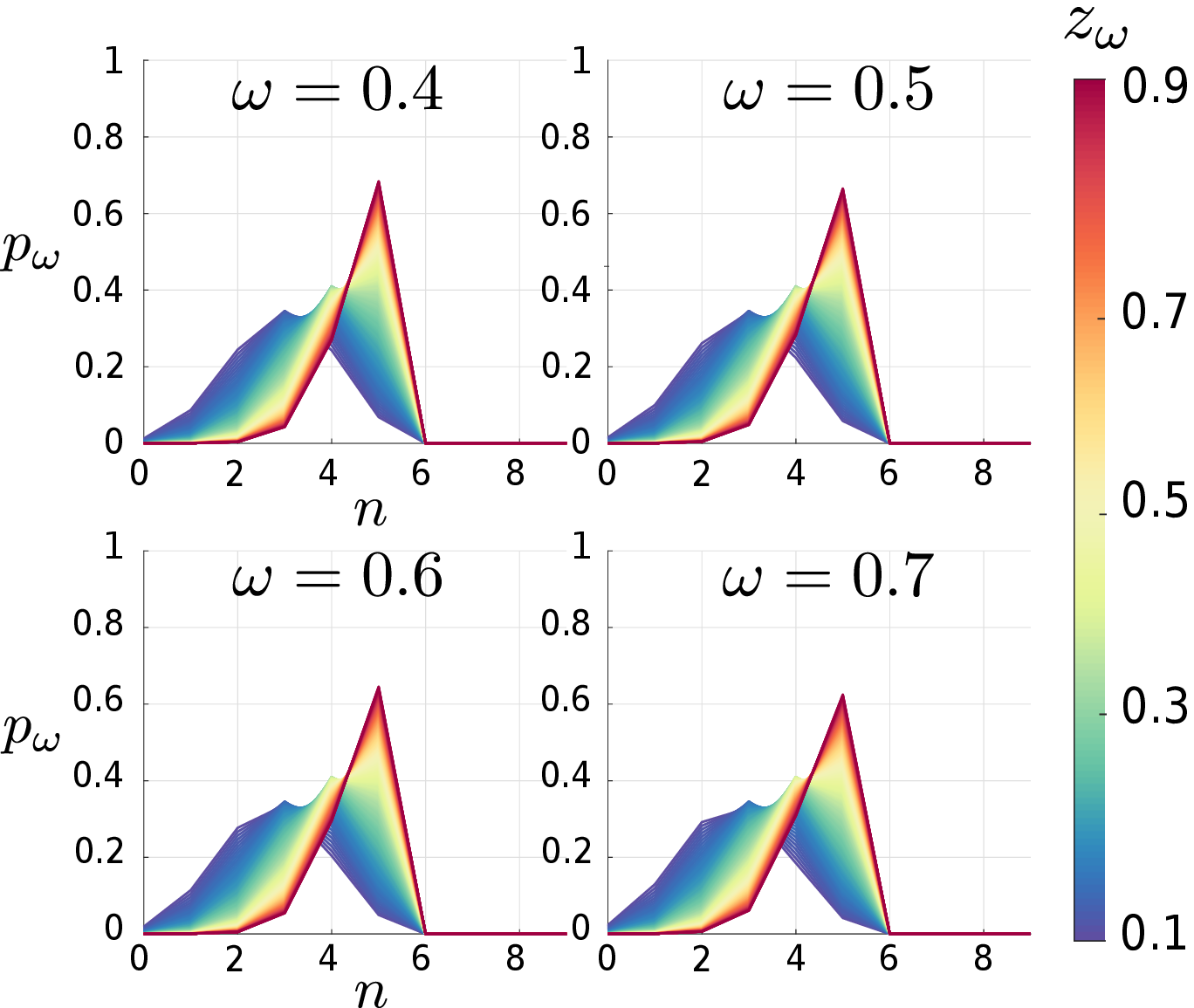} \label{fig:BBfusion} }\\
  \subfloat[]{ \includegraphics[width=0.49\linewidth,height=36mm]{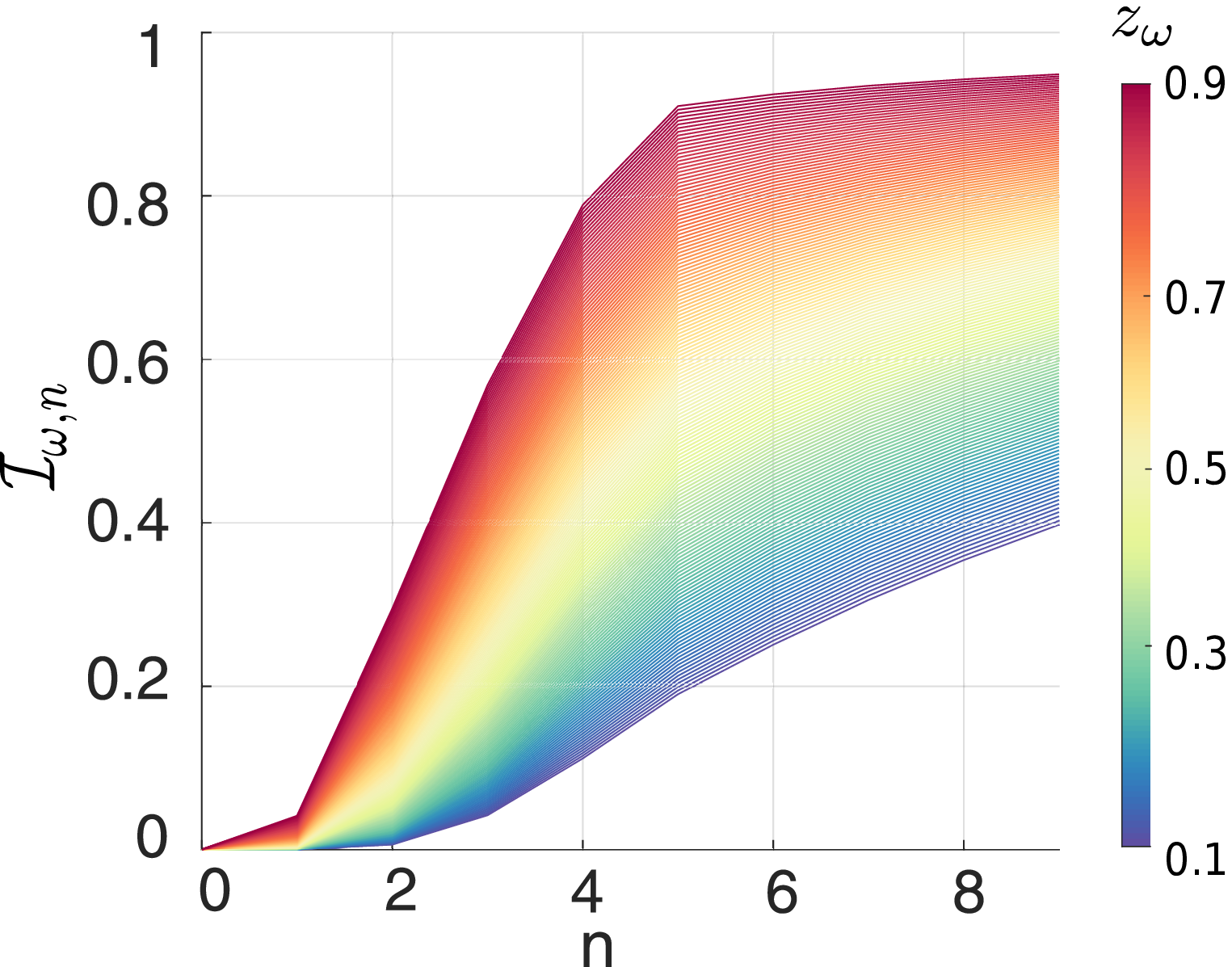} \label{fig:calI} }
  \subfloat[]{ \includegraphics[width=0.49\linewidth,height=36mm]{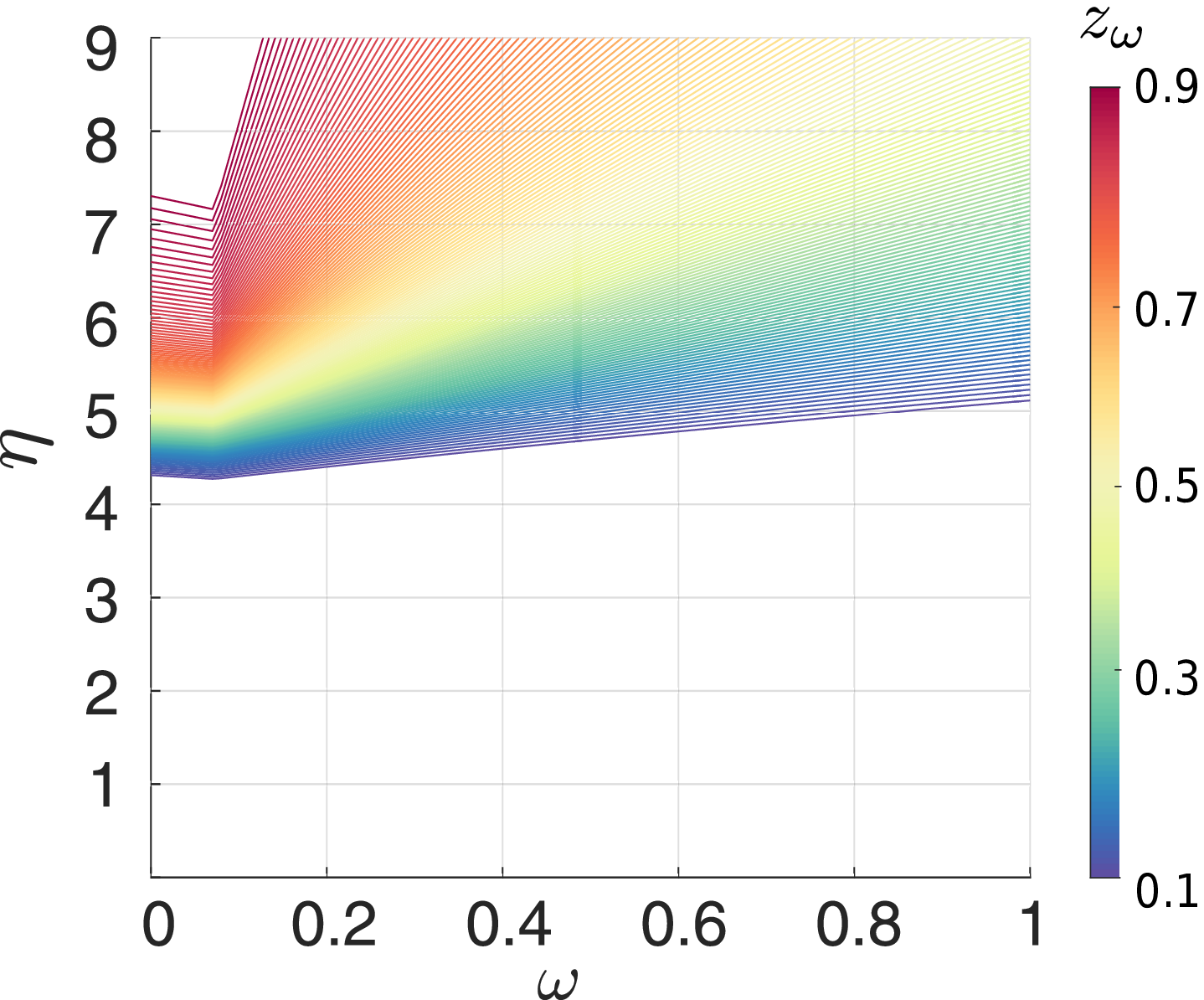} \label{fig:eta} }\\
  \subfloat[]{ \includegraphics[width=0.49\linewidth,height=36mm]{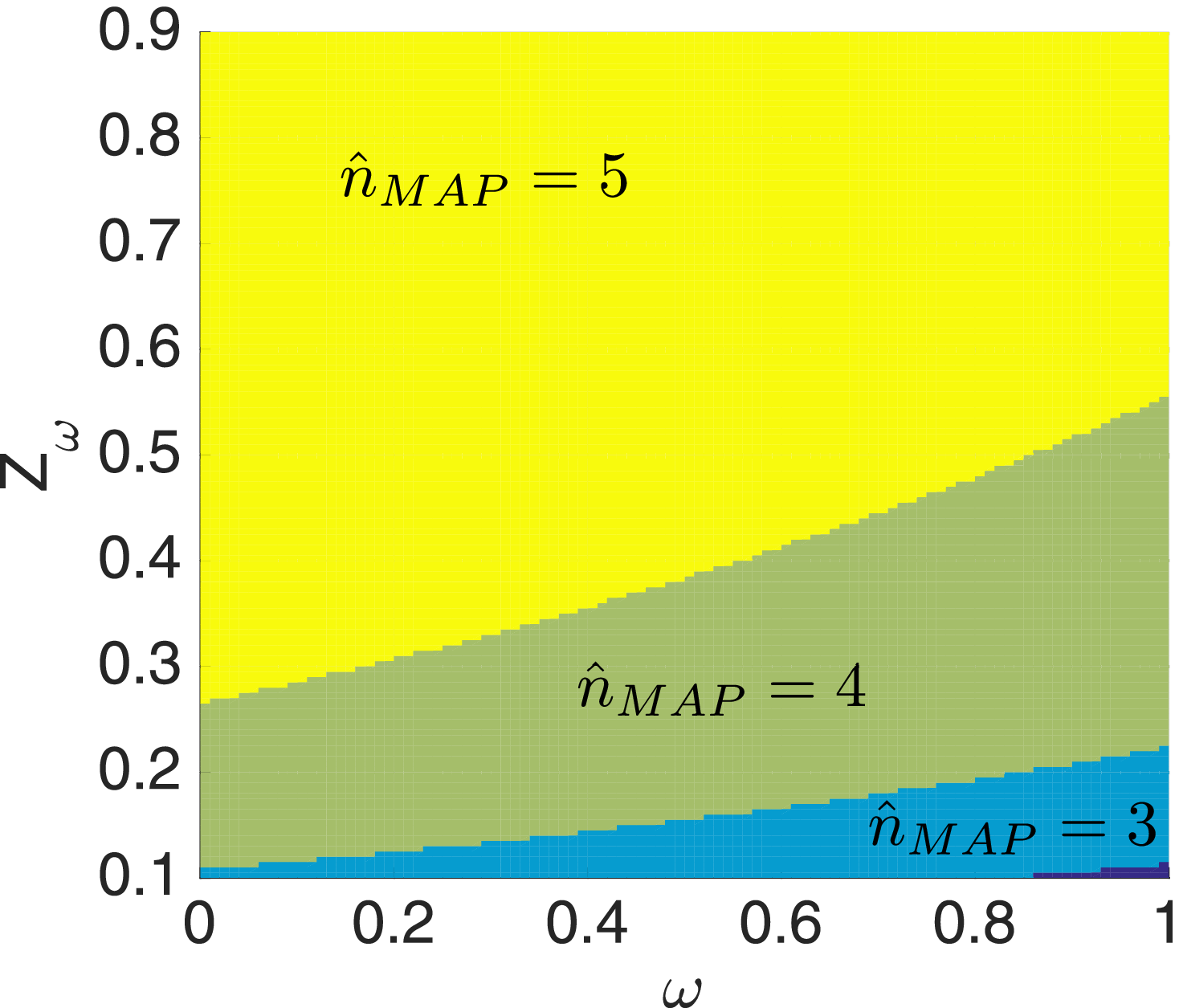} \label{fig:mapest1} }
  \subfloat[]{ \includegraphics[width=0.49\linewidth]{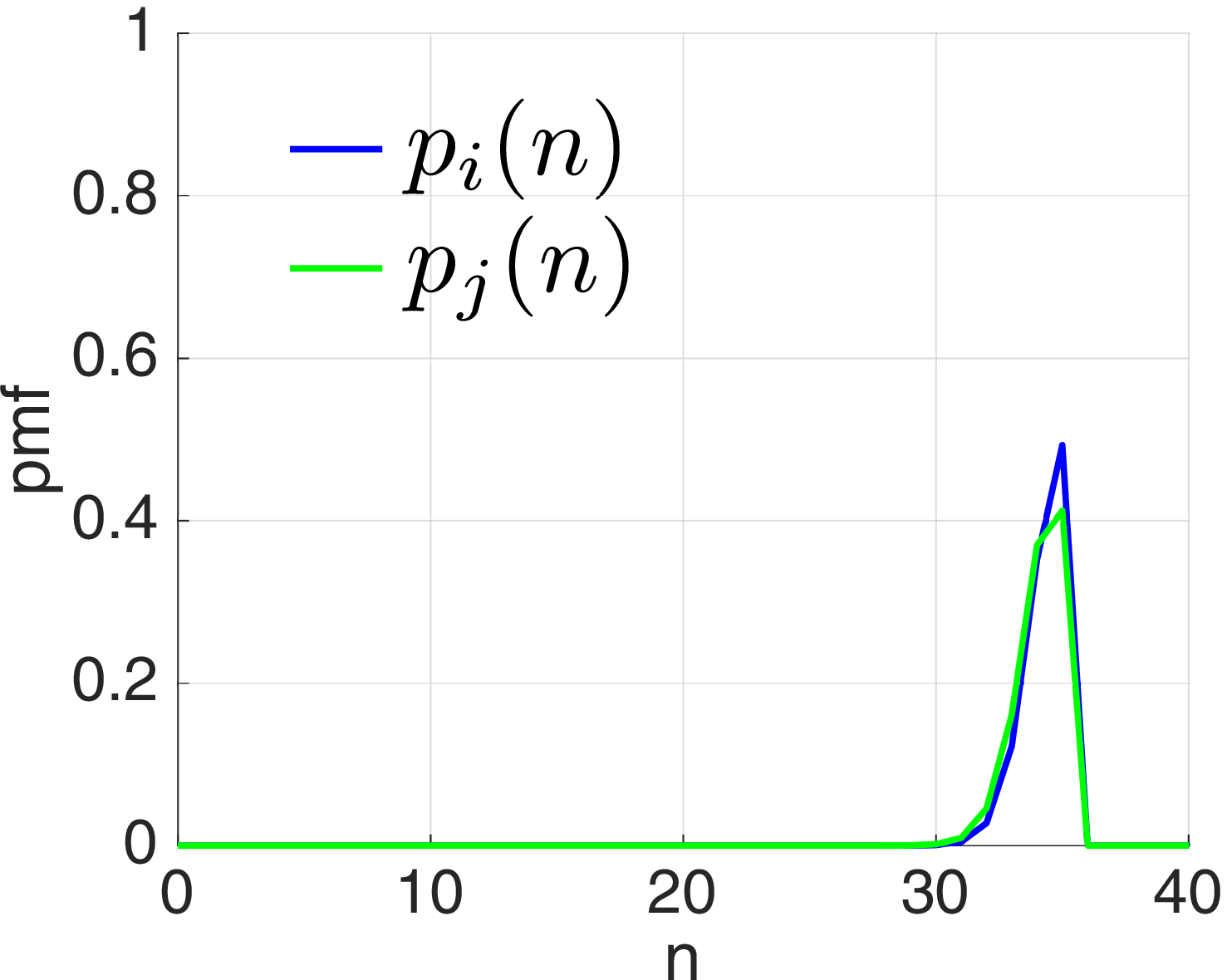} \label{fig:cardHigh} }\\
  \subfloat[]{ \includegraphics[width=0.49\linewidth]{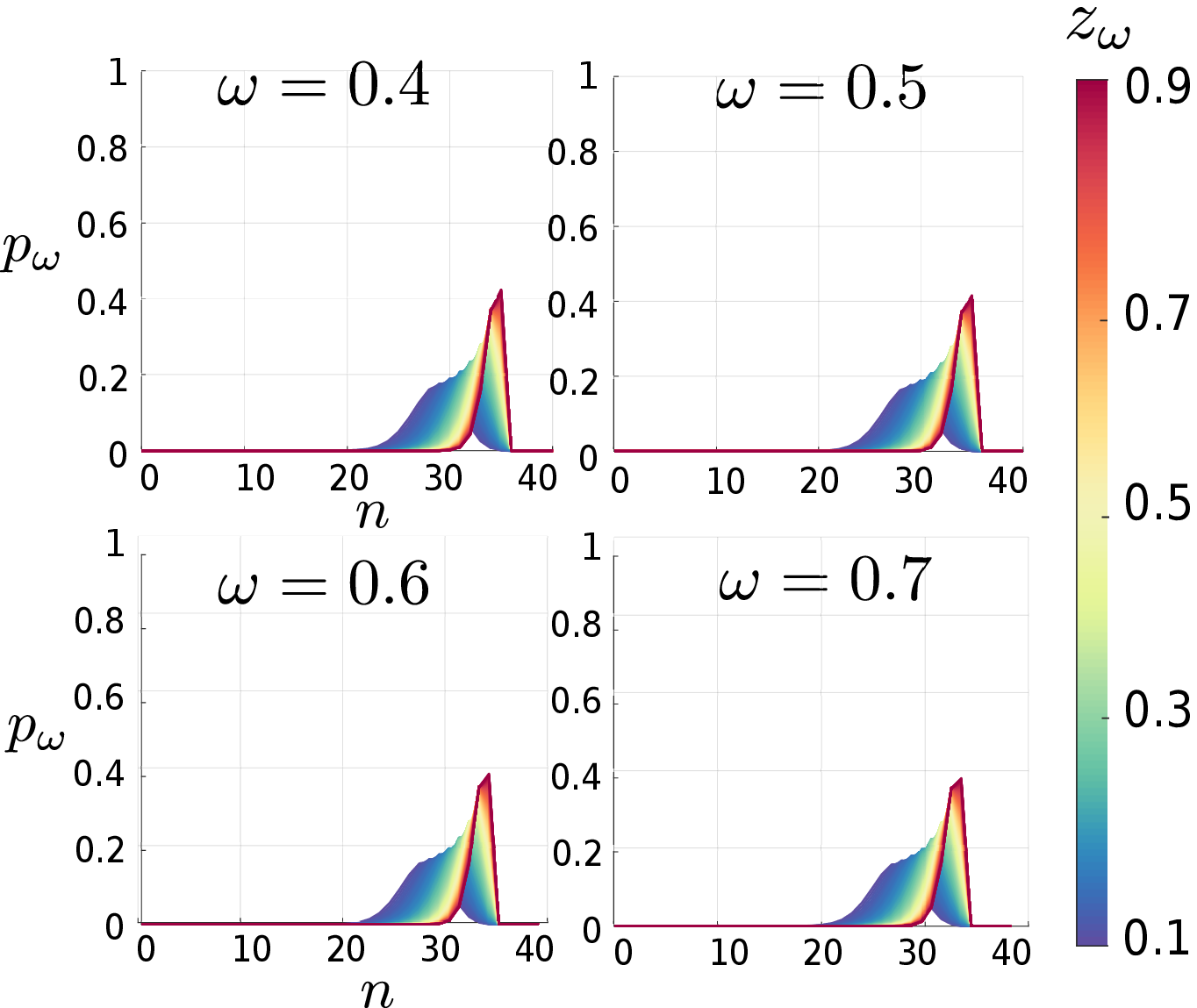} \label{fig:BBfusionHigh} }
  \subfloat[]{ \includegraphics[width=0.49\linewidth,height=36mm]{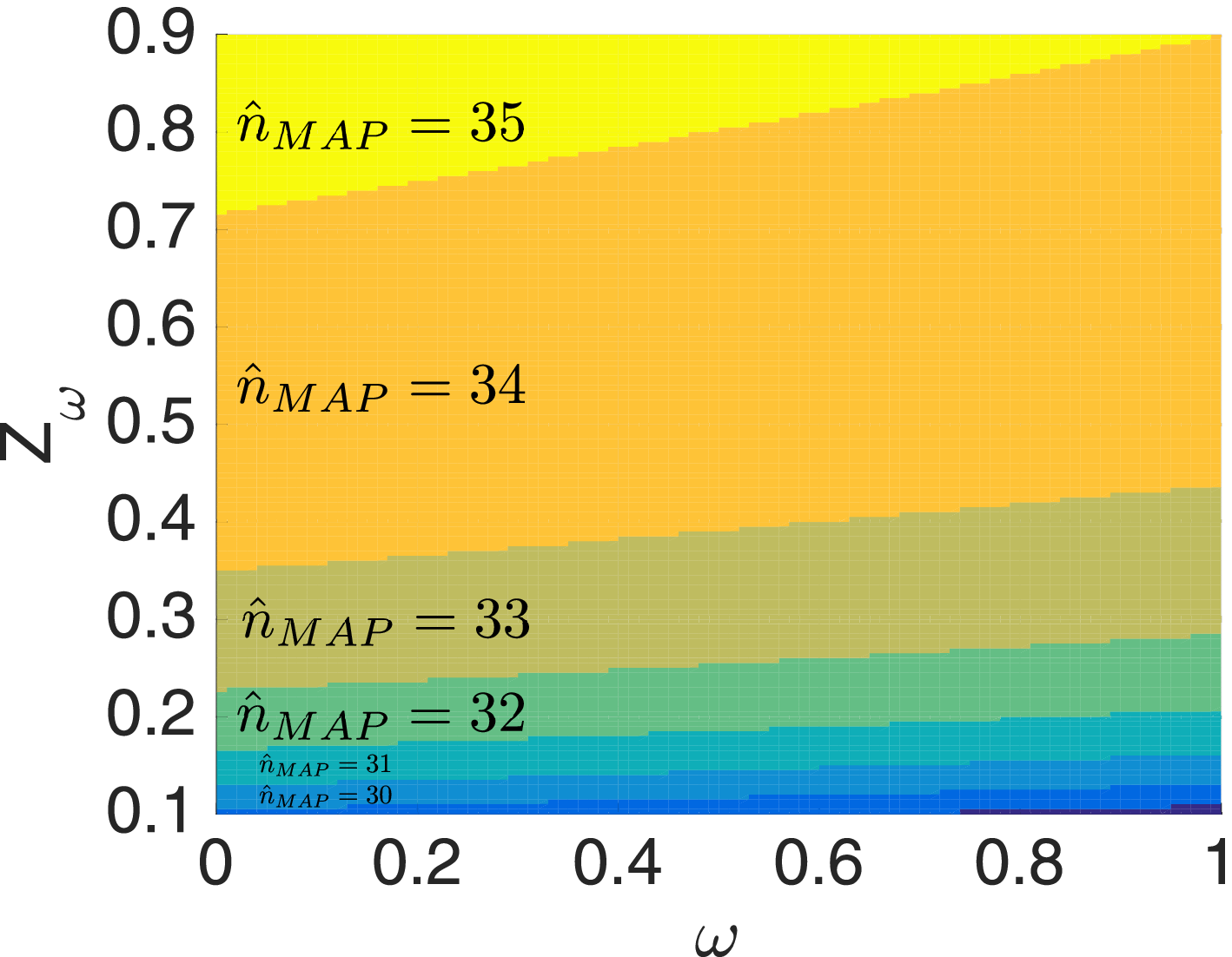} \label{fig:mapBBHigh} }
  }
  \caption[Bernoulli-Bernoulli fusion]{Illustrations of the results in Example~\ref{sec:IIDClusterExample}: \subref{fig:BB} Two cardinality distributions peaking at $n=5$. \subref{fig:BBfusion} Fused cardinalities for some intermediate values of $\omega$ and $0.1 \leq z_\omega \leq 0.9$. \subref{fig:calI} The inconsistency upper bound in \eqref{eqn:IIDinconsistency}. \subref{fig:eta} The inconsistency threshold in \eqref{eqn:consistencycutoff}. \subref{fig:mapest1} MAP estimates using the fused cardinalities for varying $\omega$ and $z_\omega$. \subref{fig:cardHigh} Cardinality distributions peaking at $n=35$. \subref{fig:BBfusionHigh}  Fused cardinalities for some intermediate values of $\omega$ and $0.1 \leq z_\omega \leq 0.9$. \subref{fig:mapBBHigh} MAP estimates using the fused cardinalities for varying $\omega$ and $z_\omega$.  }
  \label{fig:iidinconsistency}
\end{figure}
\begin{corollary}[IID cluster inconsistency]
\label{cor:iid}
 Given two IID cluster finite set distributions $f_i=( p_i(n), \rho_i )$ and $f_j=( p_j(n), \rho_j )$, the fused cardinality distribution $p_\omega$ satisfies the inconsistency condition in~\eqref{eqn:inconsistency} in Proposition~\ref{prop:inconsistency} for the number of objects $n$ and non-zero $p_i(n),p_j(n)$ if
 \begin{equation}
 z_\omega < {\cal I}_{\omega, n} \nonumber \\
\end{equation}
holds, where the term on the right hand side is
\begin{equation}
  {\cal I}_{\omega, n} \triangleq \left( N_\omega \frac{\min\{ p_i(n), p_j(n) \}}{p^{(1-\omega)}_i(n)p^{\omega}_j(n)} \right)^{1/n},  \label{eqn:IIDinconsistency} 
\end{equation}
$z_\omega$ is given in \eqref{eqn:commonratio}, and, $N_\omega$ is obtained by substituting \eqref{eqn:zomegaPoisson} in~\eqref{eqn:Nomega}.
\end{corollary}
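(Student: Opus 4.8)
The plan is to specialise the general inconsistency condition of Proposition~\ref{prop:inconsistency} to the IID cluster case by exploiting the geometric form of the scale sequence. First I would substitute $z_\omega(n)=z_\omega^n$ from~\eqref{eqn:zomegaPoisson} into both the fused cardinality pmf~\eqref{eqn:emdcard} and its normaliser~\eqref{eqn:Nomega}, so that $p_\omega(n)=A_n z_\omega^n / N_\omega$ with the shorthand $A_n\triangleq p_i^{(1-\omega)}(n)p_j^{\omega}(n)$ and $N_\omega=\sum_{n'}A_{n'}z_\omega^{n'}$. Writing $m\triangleq\min\{p_i(n),p_j(n)\}$, the inconsistency requirement~\eqref{eqn:inconsistency} then reads simply $A_n z_\omega^n / N_\omega<m$.

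The core of the argument is then a one-line rearrangement. Multiplying through by $N_\omega>0$ gives $A_n z_\omega^n<m\,N_\omega$, and dividing by $A_n>0$ yields $z_\omega^n<m\,N_\omega/A_n$. The division is legitimate precisely because the corollary assumes $p_i(n),p_j(n)$ non-zero, whence $A_n>0$; this is the only substantive hypothesis. Since all quantities are positive and $n\geq 1$, the map $t\mapsto t^{1/n}$ is strictly increasing, so taking $n$-th roots preserves the inequality and produces exactly $z_\omega<\big(N_\omega\min\{p_i(n),p_j(n)\}/(p_i^{(1-\omega)}(n)p_j^{\omega}(n))\big)^{1/n}={\cal I}_{\omega,n}$, matching~\eqref{eqn:IIDinconsistency}. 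Because each step is reversible, the stated bound is in fact equivalent to inconsistency, not merely sufficient.

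To confirm consistency with Proposition~\ref{prop:inconsistency} I would also carry out the alternative route of substituting $z_\omega(n)=z_\omega^n$ directly into condition~\eqref{eqn:condition}: there the numerator $\sum_{n'\neq n}A_{n'}z_\omega^{n'}$ equals $N_\omega-A_n z_\omega^n$ and the denominator simplifies to $A_n(1-m)/m$, so cross-multiplication gives $A_n z_\omega^n(1-m)<m\,(N_\omega-A_n z_\omega^n)$. The only point requiring care is that the $-m A_n z_\omega^n$ contributions cancel on both sides, collapsing the expression to the same clean form $A_n z_\omega^n<m\,N_\omega$; this re-absorption of the boundary ($n'=n$) term back into $N_\omega$ is what makes the final bound depend on $N_\omega$ rather than on the truncated sum. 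I expect no genuine obstacle beyond this bookkeeping, since the factorised localisation densities already deliver the geometric structure $z_\omega(n)=z_\omega^n$ that lets $z_\omega$ be pulled out of the $n$-th power in a single root.
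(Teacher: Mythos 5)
Your proposal is correct and takes essentially the same route as the paper: the paper's proof likewise substitutes the geometric sequence $z_\omega(n)=z_\omega^n$ from \eqref{eqn:zomegaPoisson} into condition \eqref{eqn:condition} of Proposition~\ref{prop:inconsistency} and rearranges using \eqref{eqn:Nomega}, which is exactly the computation in your second paragraph, and your primary route starting from \eqref{eqn:inconsistency} is the same algebra arranged in the opposite order. Your additional observation that every step is reversible, so the bound is in fact equivalent to inconsistency rather than merely sufficient, is a correct refinement that the paper does not state explicitly.
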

The proof follows from substituting \eqref{eqn:zomegaPoisson} in~\eqref{eqn:condition}  and using \eqref{eqn:Nomega} after rearrangement of the terms. The inconsistency condition in \eqref{eqn:IIDinconsistency} depends both on $n$ and $\omega$, and, it is not straightforward to relate the inconsistent bins to object number estimation either in the MMSE or MAP rules. On the other hand, the base of the exponent in \eqref{eqn:IIDinconsistency} is smaller than one and hence $\cal I$ approaches to one as $n$ grows. Therefore, for some threshold $\eta$, the fused object number probabilities will be lower than the input cardinality reports for all $n>\eta$. Such a threshold can easily be found from~\eqref{eqn:IIDinconsistency} as
\begin{eqnarray}
 \eta &=& \frac{ \log \left( N_\omega \gamma_\omega \right)}{\log z_\omega} ,
 \label{eqn:consistencycutoff} \\
  \gamma_\omega  &\triangleq& \min_{n'} \frac{\min\{ p_i(n'), p_j(n') \}}{p^{(1-\omega)}_i(n')p^{\omega}_j(n')} \nonumber
\end{eqnarray}

As a result, one should expect estimation biases to become more severe for higher object numbers. For a small number of objects, these effects do not necessarily yield biases in MAP estimations, which  also explains the accurate estimates obtained using EMD fusion of C-PHD filters in simulated scenarios, e.g., in~\cite{Uney2013}. Next, we demonstrate this point in an example involving fusion of two binomial cardinality distributions.
\begin{example}
\label{sec:IIDClusterExample}
Let us consider the EMD fusion of two finite set distributions with binomial cardinalities given by $p_i(n) = B(n; k=5,P=0.95)$ and $p_j(n) = B(n; k=5,P=0.92)$ where these distributions give the probability that $n$ objects exist simultaneously among $k=5$ possibilities each with an existence probability of $P$ (\figurename~\ref{fig:iidinconsistency}\subref{fig:BB}). Of particular interest is the characteristics of $p_\omega$ as $z_\omega$ and $\omega$ vary in $0.1 \leq z_\omega \leq 0.9$ and $0\leq \omega \leq 1$, respectively. \figurename~\ref{fig:iidinconsistency}\subref{fig:BBfusion} presents fused distributions obtained by  varying $z_\omega$ and some intermediate values of $\omega$. Note that the cardinality $n$ at which the fused distributions peaks varies with $z_\omega$ as suggested in Corollary~\ref{cor:iid}. In particular, the inconsistency bound in \eqref{fig:calI} is illustrated in \figurename~\ref{fig:iidinconsistency}\subref{fig:calI} which monotonically increases with $n$ as discussed. The inconsistency threshold for $n$ as given in \eqref{eqn:consistencycutoff} is given in \figurename~\ref{fig:iidinconsistency}\subref{fig:eta}. Note that for a large ratio of $z_\omega$ and $\omega$ values, this threshold is larger than five and the MAP estimate for the cardinality given in \figurename~\ref{fig:iidinconsistency}\subref{fig:mapest1} agrees with the individual MAP estimates of $\hat n_i = \hat n_j =5$. However, there are also MAP estimates that indicate less than five objects caused by the IID inconsistency.
 
These computations are repeated for cardinality distributions peaking at a higher $n$ value. Specifically, $p_i(n) = B(n; k=35,P=0.98)$ and $p_j(n) = B(n; k=35,P=0.975)$ are used (see \figurename~\ref{fig:iidinconsistency}\subref{fig:cardHigh}) which have individual map estimates of $\hat n_i = \hat n_j = 35$. The fused cardinalities in \figurename~\ref{fig:iidinconsistency}\subref{fig:BBfusionHigh} illustrate that for a larger subset of $(z_\omega, \omega)$ pairs the IID inconsistency occurs, now, as discussed above. The resulting errors in estimating the number of objects is given in \figurename~\ref{fig:iidinconsistency}\subref{fig:mapBBHigh} which verifies our expectation: Based on that~\eqref{eqn:IIDinconsistency} approaches to $1$ with increasing $n$, as the peak cardinality increases, the IID inconsistency detoriates decision making more.
\end{example}

\subsection{Summary of results}
As a summary, this section has shown that when EMDs of finite set densities are used for their fusion, the resulting cardinality distribution will bear inconsistencies depending on $z_\omega(n)$. Proposition~\ref{prop:inconsistency} provides a general condition on the fused distribution to be inconsistent with the input distributions at a cardinality value $n$. This condition is specialised for Bernoulli finite set densities in Corollary~\ref{ref:corbernoulli}.  Example~\ref{ex:GaussBernoulli} has demonstrated that this condition holds with overwhelming probability for Bernoulli EMDs. In Poisson cardinality distributions, there is a single parameter $\lambda$ that specifies the distribution for all $n$. Proposition~\ref{prop:PoissonInconsistency} provides a condition of inconsistency in this parameter, similarly as an upper bound on $z_\omega(n)$. It is pointed out that because $z_\omega(n)$ is determined by the sensing diversity as well as sensor measurement histories in a sensor network, its value should be expected to be less than one in these settings\footnote{{The authors at this point would like to conjecture that $z_\omega(n)<1$ with probability one in a multi-sensor setting in which the finite set densities to be fused are posteriors obtained from recursive Bayesian filtering of local sensor data, i.e., $f_i(X)=f(X|Z^i_{1:t})$ and $f_j(X)=f(X|Z^j_{1:t})$ for realisations $Z^i_{1:t}$ and $Z^j_{1:t}$ of (independent) measurement processes associated with sensors $i$ and $j$, respectively.}} which in turn shows that Poisson EMDs are very prone to inconsistencies, as well. IID cluster processes have more general cardinality distributions. For the case, Proposition~\ref{prop:inconsistency} specialises to Corollary~\ref{cor:iid} which reveals that inconsistencies should be expected in MAP estimates of the cardinality, when the input densities indicate a high number of objects. These points are demonstrated in Example~\ref{sec:IIDClusterExample}.

\section{Cardinality consistent fusion of finite set distributions}
\label{sec:CardinalityConsistentFusion}
In this section, we propose a new approach that accommodates EMD fusion while avoiding the cardinality inconsistencies detailed in Section~\ref{sec:FiniteSetEMDsCardinalityDistributions}. These inconsistencies result from the dependency of the fused cardinality pmf on the scaling factor series $z_\omega(n)$. One way to remove this dependency is to decouple the fusion problem for different cardinalities by asserting a separate variational problem for each cardinality as opposed to using $\text{P2}$ in \eqref{eqn:centroid2} with finite set distributions as a single entity. 

\subsection{Variational problem definitions}
\label{sec:VarProblemDefinitions}
Let us first consider finite set distributions as parameterised in \eqref{eqn:RFSdensity} and remind that problem $\text{P2}$ is solved with distributions in the form given in~\eqref{eqn:emdloc}--\eqref{eqn:Nomega}. Now, let us consider the following family of variational problems given $f_i$ and $f_j$:
\begin{eqnarray}
\text{(P3)}\,\,\,\,\, {\text {For}}\,\,\, n&=&1,2,\ldots \nonumber\\
(\omega_n^*,\rho_{\omega_n^*,n}) &\triangleq& \arg \max_{\omega \in [0,1]} \min_{\rho_n \in {\cal P}_n } J_{\omega,n}[\rho_n]
  \label{eqn:centroidfamily} \\
 J_{\omega,n}[\rho_n] &\triangleq& (1-\omega)D(\rho_n || \rho_{i,n}) + \omega D(\rho_n || \rho_{j,n}).
 \nonumber
\end{eqnarray}

Here, ${\cal P}_n$ is the space of localisation densities with $n$ arguments which are symmetric in their arguments. Note that $\text{P3}$ is a set of $\text{P2}$ that has the localisation distributions for each cardinality $n$ as the entries, separately. Equivalently, $\text{P3}$ asserts the variational problem of fusion be treated as a conditional problem to be solved given $n$.

Following our discussion in Section~\ref{sec:problemdefinition}, solutions of these uncoupled problems have an EMD form given by~\eqref{eqn:emdloc}~and~\eqref{eqn:zomega}\footnote{It is easy to show that because $\rho_{i,n}$ and $\rho_{j,n}$ are symmetric in their arguments, $\rho_{\omega,n}$ also exhibits this symmetricity.}. One difference here compared to the solution of problem $\text{P2}$ is that for each $n$, a different optimal weight $\omega^*_n$ will be output, in general, as opposed to a single one. In addition --and, more importantly-- problem $\text{P2}$ decouples fusion of the cardinality distributions thus given $(\omega^*_n,\rho_{\omega^*_n,n})$, the fused cardinality distribution becomes an additional degree of freedom in the fused finite set distribution. In other words, the fusion of cardinality distributions can now be carried out in an isolated fashion in addition to problem $\text{P2}$ as a solution to 
\begin{eqnarray}
\label{eqn:p4}
\mspace{-30mu}\text{(P4)} \,\,\,\,\,\,(\omega_c^*,\tilde p_{\omega_c^*}) &\triangleq& \arg \max_{\omega \in [0,1]} \min_{ p \in {\cal P}_c } J_{\omega,c}[ p ]
   \\
  J_{\omega,c}[ p ] &\triangleq& (1-\omega)D( p || p_{i}) + \omega D( p || p_{j}).
\nonumber
\end{eqnarray}

Following the discussion in Section~\ref{sec:problemdefinition}, the solution to problem $\text{P4}$ is the EMD of the cardinality pmfs
\begin{eqnarray}
 \tilde p_\omega(n)&=& \frac{1}{\tilde N_\omega}p_i^{(1-\omega)}(n)p_j^{\omega}(n) 
 \label{eqn:tildecard} \\
 \tilde N_\omega &=& \sum_{n'=0} p_i^{(1-\omega)}(n')p_j^{\omega}(n').
 \label{eqn:tildeNomega}
\end{eqnarray}
evaluated at $\omega = \omega^*_c$.

This distribution differs from the cardinality of the solution to $\text{P2}$ (given in~\eqref{eqn:emdcard}~and~\eqref{eqn:Nomega}) in that it does not involve $z_\omega(n)$, and, is an EMD of the input finite set cardinalities. Therefore, the consistency condition (see~\eqref{eqn:consistency})
\begin{equation}
 \tilde p_{\omega}(n) \geq \min \{ p_i(n), p_j(n) \}
 \notag
\end{equation}
is satisfied for all $n$ and for all $\omega$ regardless of $z_\omega(n)$. Thus, $\tilde p_{\omega}$ prevents the decision errors stemming from the cardinality inconsistencies of the solutions to $\text{P2}$ as detailed in the previous section.

As a result, $\text{P3}$ and $\text{P4}$ yield a fused finite set density featuring cardinality consistency given by
\begin{eqnarray}
 \tilde f^*(X) &=&  \tilde f_{\Omega}(X) \bigg\rvert_{\Omega = (\omega_c = \omega^*_c,\omega_1 = \omega^*_1,\omega_2 = \omega^*_2,\ldots)}  \\
 \tilde f_{\Omega}(X) &\triangleq& p_{\omega_c}\left(  n \right) n! \rho_{\omega_n,n} (x_1,...,x_n)
 \label{eqn:tildefusion}
\end{eqnarray}
where $\omega^*_c$ is found by solving the maximisation in~\eqref{eqn:p4} with a cardinality distribution given by \eqref{eqn:tildecard}~and~\eqref{eqn:tildeNomega}. Here, $\omega^*_n$ solves the maximisation in~\eqref{eqn:centroidfamily} with a localisation distribution in~\eqref{eqn:emdloc}~and~\eqref{eqn:zomega}. These localisation distributions -- similar to the cardinality distribution-- are consistent individually, as they are EMDs of the inputs.

The pointwise consistency of $\tilde f$ over the space of finite sets, however, is not guaranteed. In order to clarify this point, we provide the following proposition:
\begin{proposition}[Pointwise inconsistency]
Let us consider
\begin{equation}
 \tilde f_{\omega}(X) \triangleq \tilde f_{\Omega}(X) \bigg\rvert_{\Omega = (\omega_c = \omega,\omega_1 = \omega,\omega_2 = \omega,\ldots)} 
\end{equation}
for some $\omega$. $\tilde f_\omega$ is pointwise inconsistent, i.e.,
\begin{equation}
 \tilde f_{\omega}(X) < \min \{ f_i(X), f_j(X) \}
 \label{eqn:PI}
\end{equation}
if
\begin{equation}
  \frac{ E_{\tilde p_\omega} \{ z_\omega(n) \}}{z_\omega(n) } < \frac{ \min \{ f_i(X), f_j(X) \} }{ f_\omega(X) }<=1
  \label{eqn:PIcondition}
\end{equation}
where $f_\omega$ is the finite set EMD given in~\eqref{eqn:emdloc}--\eqref{eqn:Nomega}. 
\end{proposition}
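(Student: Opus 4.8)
The plan is to exploit that $\tilde f_\omega$ and the finite set EMD $f_\omega$ are assembled from the \emph{same} localisation densities $\rho_{\omega,n}$ and differ only in their cardinality pmfs. First I would fix a finite set $X$ with $|X|=n$ and form the pointwise ratio $\tilde f_\omega(X)/f_\omega(X)$. Because $\tilde f_\omega(X)=\tilde p_\omega(n)\,n!\,\rho_{\omega,n}(x_1,\ldots,x_n)$ and $f_\omega(X)=p_\omega(n)\,n!\,\rho_{\omega,n}(x_1,\ldots,x_n)$ share the entire factor $n!\,\rho_{\omega,n}$, all dependence on the localisation part cancels and the ratio collapses to $\tilde p_\omega(n)/p_\omega(n)$.

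Next I would substitute the two cardinality pmfs from \eqref{eqn:emdcard} and \eqref{eqn:tildecard}. The common factor $p_i^{(1-\omega)}(n)p_j^{\omega}(n)$ cancels, leaving
\[
 \frac{\tilde f_\omega(X)}{f_\omega(X)}=\frac{\tilde p_\omega(n)}{p_\omega(n)}=\frac{N_\omega}{\tilde N_\omega\, z_\omega(n)}.
\]
The observation that gives the statement its form is that $N_\omega/\tilde N_\omega$ is itself an expectation: comparing \eqref{eqn:Nomega} with \eqref{eqn:tildeNomega} and using the definition of $\tilde p_\omega$ in \eqref{eqn:tildecard}, one has $N_\omega/\tilde N_\omega=\sum_{n'}\tilde p_\omega(n')z_\omega(n')=E_{\tilde p_\omega}\{z_\omega(n)\}$. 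Hence $\tilde f_\omega(X)=\big(E_{\tilde p_\omega}\{z_\omega(n)\}/z_\omega(n)\big)f_\omega(X)$.

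Finally I would insert this identity into the inconsistency inequality \eqref{eqn:PI}. Dividing through by $f_\omega(X)>0$ turns $\tilde f_\omega(X)<\min\{f_i(X),f_j(X)\}$ into exactly the left-hand inequality of \eqref{eqn:PIcondition}; since every step is reversible, the ``if'' is in fact an ``if and only if.'' The trailing bound $\min\{f_i(X),f_j(X)\}/f_\omega(X)\le 1$ in \eqref{eqn:PIcondition} is not an extra hypothesis but a consequence of the pointwise consistency of $f_\omega$ established in Section~\ref{sec:problemdefinition} (the scale bound $Z_\omega\le 1$ yielding \eqref{eqn:consistency}), so I would simply invoke it rather than reprove it.

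The computation is short, so there is no serious analytic obstacle; the only genuine content is recognising $N_\omega/\tilde N_\omega$ as $E_{\tilde p_\omega}\{z_\omega(n)\}$. The one point demanding care is that the localisation factors truly cancel, which relies on both densities using the common weight $\omega$ across all cardinalities --- precisely the collapsed choice $\Omega=(\omega,\omega,\omega,\ldots)$ in the definition of $\tilde f_\omega$. Had the distinct optimal weights $\omega_n^*$ been used, the localisation densities of $\tilde f$ and $f_\omega$ would differ and the clean pointwise ratio would no longer hold.
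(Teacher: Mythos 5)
Your proof is correct and takes essentially the same route as the paper: establish the pointwise relation $\tilde f_\omega(X) = \bigl(E_{\tilde p_\omega}\{ z_\omega(n) \}/z_\omega(n)\bigr) f_\omega(X)$, substitute it into the inconsistency inequality, and invoke the pointwise consistency of $f_\omega$ for the trailing bound; the only difference is that the paper asserts this relation ``by comparing'' the defining equations, whereas you derive it explicitly via the cancellation of the localisation factors and the identity $N_\omega/\tilde N_\omega = E_{\tilde p_\omega}\{ z_\omega(n) \}$. Your two side remarks --- that the implication is in fact an equivalence, and that the cancellation hinges on using the common weight $\omega$ across all cardinalities rather than the distinct optima $\omega_n^*$ --- are both correct and sharpen, without contradicting, the paper's statement.
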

\begin{proof}
By comparing \eqref{eqn:emdloc}--\eqref{eqn:Nomega} and \eqref{eqn:tildecard}--\eqref{eqn:tildefusion}, it can be seen that the two finite set densities of concern are related by
\begin{equation}
\tilde f_\omega (X) = \frac{ E_{\tilde p_\omega} \{ z_\omega(n) \}}{z_\omega(n) } f_\omega(X).
\label{eqn:FSDrelation}
\end{equation}
where the expectation is with respect to~\eqref{eqn:tildecard}. 

Substitution of \eqref{eqn:FSDrelation} in~\eqref{eqn:PI} yields the first inequality in \eqref{eqn:PIcondition}. Note that, $f_\omega$ satistifies the pointwise consistency condition in \eqref{eqn:consistency}, hence, the right hand side of the inequality is smaller than or equal to one.
\end{proof}

This proposition points out that pointwise consistency of $\tilde f_\omega$ is guaranteed only for those $X$ with cardinality $n$ for which the scaling factor of the localisation density $z_\omega(n) \leq 1$ equals to the expectation. If $z_\omega(n)$ is greater than the expectation to the extent that \eqref{eqn:PIcondition} is satisfied, then $\tilde f_\omega$ exhibits pointwise inconsistency despite being consistent in the global cardinality and localisation distributions. As a conclusion, pointwise consistency does not imply consistency in global cardinality in fusion of finite set densities and vice versa.

\subsection{Solving the cardinality consistent fusion problems}
\label{sec:SolvingCardinalityConsistentFusion}
\begin{algorithm}[t]
\caption{Newton iterations for solving the $n$th problem in $\text{P3}$.} 
\label{Algorithm:NewtonIterationsP3}
\begin{algorithmic}[1]
\State {Input:} $\rho_{i,n}$,$\rho_{j,n}$ \Comment{Localisation densities} 
\State {Input:} $\omega^{(0)} \in [0,1]$, $\epsilon$ \Comment{Initial value, termination threshold}
\State $k \leftarrow 1$, $\omega^{(1)}\leftarrow \infty$
\While{$|\omega^{(k)}-\omega^{(k-1)}| > \epsilon$} \Comment{termination condition}
\State\label{P3:StepZ} $z_k \leftarrow z_\omega(n)\rvert_{\omega = {\omega^{(k)}} }$ using \eqref{eqn:zomega}
\State\label{P3:StepZdash} $z^\prime_k \leftarrow z^{\prime}_{\omega}(n) \rvert_{\omega = {\omega^{(k)}} }$ using \eqref{eqn:zprime}
\State\label{P3:StepZddash} $z^{\prime\prime}_k \leftarrow z^{\prime \prime}_{\omega}(n)\rvert_{\omega = {\omega^{(k)}} } $ using \eqref{eqn:zprimeprime}
\State\label{P3:NewtonUpdate} $\omega^{(k+1)} \leftarrow \omega^{(k)} - z^{\prime}_{ k } z_{ k } / \left( z^{\prime \prime}_{k} z_{k} - \left( z^{\prime}_{k} \right)^2 \right)$ 
\State $k \leftarrow k + 1$
\EndWhile
\State Return $\omega^*_n \leftarrow \omega^{(k)}$
\State Return $\rho_{\omega^*_n,n}$ using \eqref{eqn:emdloc}
\end{algorithmic}
\end{algorithm}
The variational problems $\text{P3}$ and $\text{P4}$ are $\max \min$ optimisation problems similar to \eqref{eqn:centroid2}. Hence, the minimisations given $\omega$ are solved by the EMDs of their argument distributions (see the discussion in Section~\ref{sec:problemdefinition} and Appendix~\ref{sec:statpointsproblemp}). The objective of the outer maximisation in $\text{P3}$ is therefore (see also~\eqref{eqn:cost3})
\begin{eqnarray}
 G_n (\omega_n) &\triangleq& J_{\omega,n }[ \rho_n ]  \bigg\rvert_{\rho_n = \rho_{\omega,n}} \label{eqn:Gn} \\
 &=& -(\omega -1){\cal R}_{\omega}(\rho_{n,i},\rho_{n,j}) \nonumber \\
 &=& - \log z_\omega(n) \nonumber
\end{eqnarray}
which is a concave function of its one dimensional argument that takes values from a bounded interval. Newton iterations converge to a solution and have an excellent convergence rate~\cite{Bazaraa1993}. Starting from an initial value $\omega^{(0)} \in [0,1]$, recursive increments are made by the ratio of the first and second order derivatives, i.e.,
\begin{eqnarray}
\mspace{-20mu} \frac{ G^{\prime}_n(\omega_n) }{ G^{\prime\prime}_n(\omega_n) }  \bigg\rvert_{ {\omega_n = \omega^{(k)}} }\mspace{-10mu} &=&  \frac{ z^{\prime}_{ \omega }(n) z_{ \omega }(n) }{ z^{\prime \prime}_{\omega}(n) z_{\omega}(n) - \left( z^{\prime}_{\omega }(n) \right)^2  } \bigg\rvert_{ {\omega = \omega^{(k)}} } \\
 z^{\prime}_{ \omega }(n) &\triangleq&\int\rho_{i,n}^{1-\omega}(x)\rho_{j,n}^{\omega }(x) \log \frac{\rho_{j,n}(x)}{\rho_{i,n}(x) }\mathrm{d}x \label{eqn:zprime}\\
 z^{\prime\prime}_{ \omega }(n) &\triangleq&\int\rho_{i,n}^{1-\omega}(x)\rho_{j,n}^{\omega }(x) \left( \log \frac{\rho_{j,n}(x)}{\rho_{i,n}(x) } \right)^2\mathrm{d}x  \nonumber \\[-12pt]\label{eqn:zprimeprime}
\end{eqnarray}
where $\omega^{(k)}$ is the value found in the $k$th iteration. Here, $z_\omega(n)$ is given in~\eqref{eqn:zomega} and its derivatives in~\eqref{eqn:zprime}~and~\eqref{eqn:zprimeprime} are found in Appendix~\ref{sec:DerivationAlg1}. Algorithm~\ref{Algorithm:NewtonIterationsP3} explicitly specifies this iterative solution which takes the localisation densities as inputs together with an initial value and termination condition. Upon convergence, the optimal value $\omega_n^* = \omega_n^{(k)}$ is found for which the corresponding EMD $\rho_{\omega_n^{(k)},n}$ is the fused density.

An analogous iterative algorithm for finding the consistently fused cardinality distribution as a solution to $\text{P4}$ is given in Algorithm~\ref{Algorithm:NewtonIterationsP4}. Note that the computations involved here can be carried out exactly for distributions with finite support, in practice.
 
\begin{algorithm}[t]
\caption{Newton iterations solving $\text{P4}$ for consistent cardinality fusion.} 
\label{Algorithm:NewtonIterationsP4}
\begin{algorithmic}[1]
\State {Input:} $p_i(n)$,$p_j(n)$ \Comment{Cardinality pmfs} 
\State {Input:} $\omega^{(0)} \in [0,1]$, $\epsilon$ \Comment{Initial value, termination threshold}
\State $k \leftarrow 1$, $\omega^{(1)}\leftarrow \infty$
\While{$|\omega^{(k)}-\omega^{(k-1)}| > \epsilon$} \Comment{termination condition}
\State $N_k \leftarrow N_\omega \rvert_{\omega = {\omega^{(k)}} }$ using \eqref{eqn:Nomega}
\State $N^\prime_k \leftarrow \sum p_i^{1-\omega}(n)p_j^{\omega}(n) \log \big( p_j(n)/p_i(n) \big)$ 
\State $N^{\prime\prime}_k \leftarrow \sum p_i^{1-\omega}(n)p_j^{\omega}(n) \bigg( \log \big( p_j(n)/p_i(n) \big) \bigg)^2$
\State $\omega^{(k+1)} \leftarrow \omega^{(k)} - N^{\prime}_{ k } z_{ k } / \left( N^{\prime \prime}_{k} N_{k} - \left( N^{\prime}_{k} \right)^2 \right)^2$ 
\State $k \leftarrow k + 1$
\EndWhile
\State Return $\omega^*_c \leftarrow \omega^{(k)}$
\State Return $p_{\omega^*_c}$ using \eqref{eqn:emdcard}
\end{algorithmic}
\end{algorithm}
 
Algorithm~\ref{Algorithm:NewtonIterationsP3}, on the other hand, should accommodate adequate computational schemes for exactly or approximately evaluating the integrals involved. In the latter case, it admits the interpretation of being a stochastic gradient approach~\cite{Bubeck2015}. Specification of such procedures is beyond the scope of this work.

There are, nevertheless, structural simplifications in both $\text{P3}$ and $\text{P4}$ for different families of finite set families. For Poisson and IID cluster finite set distributions, the localisation densities are parameterised by a single density over a single state variable --as given in \eqref{eqn:Poissonloc}-- which is the same for different cardinalities. In addition, the solution to the inner minimisation in $\text{P3}$ (equivalently $\text{P}$ in \eqref{eqn:centroid}) is also a Poisson and IID cluster, respectively, in the Poisson and IID cluster cases~\cite{Uney2013}. Thus, $D(\rho_n || \rho_{n,i} )=n D(\rho || \rho_i )$ where $\rho$ parameterises $\rho_n$, and, the family of problems in $\text{P3}$ satisfy
\begin{equation}
J_{\omega,n}[\rho_n] = n J_{\omega,1}[\rho].
\end{equation}
Consequently, the optimal solution to $\text{P3}$ for cardinality $n$ is parameterised by the optimal solution to $n=1$ thereby restricting it to the case for only $n=1$. 

Bernoulli finite sets have nonzero cardinality pmf only for $n \leq 1$ naturally restricting $\text{P3}$ to $n=1$. If the parameterising densities $\rho_i$ and $\rho_j$ are Gaussians, then $\text{P3}$ specifies a covariance intersection procedure~\cite{Hurley2002}. For this case, $\text{P4}$ has a closed form solution given by~\cite{Nielsen2011a}
\begin{eqnarray}
  \omega^*_c &=& \frac{ \log \bigg( \frac{\log (1-\alpha_i)/(1-\alpha_j) }{\log (\alpha_j/\alpha_i) } \bigg) - \log \bigg( \frac{\alpha_i}{1-\alpha_i} \bigg) }{ \log \frac{1-\alpha_i}{1-\alpha_j} + \log \frac{\alpha_j}{\alpha_i} } \label{eqn:beroptomega} \\
 \alpha^* &= &\frac{ \alpha_i^{1-\omega^*_c } \alpha_j^{\omega^*_c} }{\alpha_i^{1-\omega^*_c } \alpha_j^{\omega^*_c} + (1-\alpha_i)^{1-\omega^*_c}(1-\alpha_j)^{\omega^*_c} } \label{eqn:beroptalpha}
\end{eqnarray}

For Poisson cardinality pmfs, similarly a closed form solution exists for $\text{P4}$ which is given by~\cite{Nielsen2011a}
\begin{eqnarray}
 \omega^*_c &=& \frac{ -\log\big( \log (\lambda_j/\lambda_i) \big) + \log (\lambda_j/\lambda_i -1) }{ \log (\lambda_j/\lambda_i) }  \label{eqn:poisoptomega} \\
 \lambda^* &=& \lambda_i^{1-\omega^*_c}\lambda_j^{\omega^*_c} \label{eqn:poisoptlambda}
\end{eqnarray}

It is worthwhile to notice that both Bernoulli and Poisson distributions are exponential family distributions and the above solutions bear the geometric properties aforementioned in Section~\ref{sec:genci} and proved in~\cite{Nielsen2013}.

\subsection{Demonstration of cardinality consistent fusion}
 
In this section, we revisit the examples in Section~\ref{sec:FiniteSetEMDsCardinalityDistributions} involving cardinality inconsistencies and demonstrate the efficacy of the solutions of Problems $\text{P3}$ and $\text{P4}$ in these fusion scenarios. 
 
\begin{example}[Gauss-Bernoulli case revisited]
\label{ex:GaussBernoulliRe}
Let us consider the Gauss-Bernoulli case in Example~\ref{ex:GaussBernoulli} in the light of the discussion above. Fusion of localisation distributions in $\text{P3}$ involve the fusion of only a single pair for $n=1$, for the case. These distributions $\rho_i$ and $\rho_j$ given by \eqref{eqn:GaussianLocs}  are Gaussians, therefore, Algorithm~\ref{Algorithm:NewtonIterationsP3} is equivalently an iterative covariance intersection algorithm that optimises $\omega$ to achieve the KLD equality criteria in~\eqref{eqn:StationaryOmega}. The Newton update for the parameter $\omega$ in Step~\ref{P3:NewtonUpdate} of Algorithm~\ref{Algorithm:NewtonIterationsP3} is carried out as follows: Evaluation of $z_\omega$ in Step~\ref{P3:StepZ} at $\omega = \omega^{(k)}$ is made using its closed form expression in~\eqref{eqn:GaussEMDscale}. Given $z_\omega$, the derivative in Step~\ref{P3:StepZdash} is also found in closed form using the following identity
\begin{equation}
z^{\prime}_{\omega} = z_\omega \big( D(\rho_{\omega}||\rho_i) - D(\rho_{\omega}||\rho_j) \big),
\end{equation}
which can easily be verified by dividing both sides of \eqref{eqn:zprime} to $z_\omega(n)$. This quantity is computed by evaluating the KLD of multi-variate Gaussian densities given by (see, for example,~\cite[A.23]{Rasmussen2006})
\begin{multline}
 D(\rho_{\omega}||\rho_i) = {1/2}\log | \vect{C}_i \vect{C}_{\omega}^{-1} | \\ 
 + \vect{tr}\{ \vect{C}_i^{-1}\big((\vect{m}_\omega-\vect{m}_i)(\vect{m}_\omega-\vect{m}_i )^T + \vect{C}_\omega - \vect{C}_i \big)\},
\end{multline}
where $ \vect{tr}\{.\}$ is the trace of its matrix argument. The second derivative in Step~\ref{P3:StepZddash} on the other hand, is found approximately using the Monte Carlo method~\cite[Chp.3]{Robert2004} targeting the integration in \eqref{eqn:zprimeprime} divided by $z_\omega(n)$. Therefore, $L$ samples are generated from $\rho_\omega$ for this step, i.e., $x^{(l)} \sim \rho_\omega$ for $l=1,
\ldots,L$. Using these samples, the approximation seeked is given by
\begin{equation}
z^{\prime\prime}_\omega \approx {z_\omega} \times \frac{1}{L}\sum_{l=1}^L \Big( \log \frac{\rho_j(x^{(l)})}{\rho_i(x^{(l)})} \Big)^2
\label{eq:}
\end{equation}
where the approximation error decreases with ${\cal O}(1/\sqrt{L})$. 

We input the Gaussian pairs in Example~\ref{ex:GaussBernoulli} that are obtained by varying the covariance condition number $\kappa$ -- equivalently the sensing diversity-- to Algorithm~\ref{Algorithm:NewtonIterationsP3} and use the computational procedures above. \figurename~\ref{fig:gaussSGCIs} depicts optimally weighted EMDs for $\kappa=1,10,20$. The optimal weight $\omega^*$ output by Algorithm~\ref{Algorithm:NewtonIterationsP3} as a function of the condition number $\kappa$ is given in \figurename~\ref{fig:kappa_vs_omega}. Here, the termination threshold is set to $\epsilon = 1\mathrm{e}{-4}$ and the number of samples used for the Monte Carlo estimate in Step~\ref{P3:StepZddash} is $L=1000$. Convergence is declared after an average of $3.4$ and a maximum of $5$ iterations. Note that $\omega^*$ yields a fused result that bears more influence from $\rho_i$ as $\kappa$ increases.

\begin{figure}[bt!]
  \centering{
  \begin{minipage}{\linewidth}
    \includegraphics[width=0.31\linewidth]{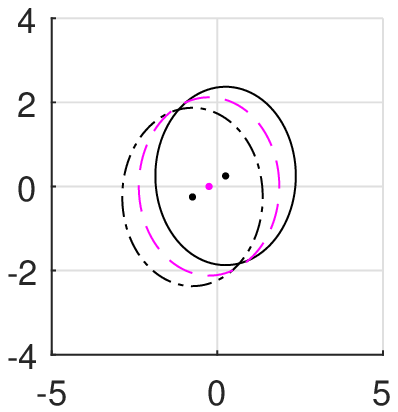}
    \includegraphics[width=0.31\linewidth]{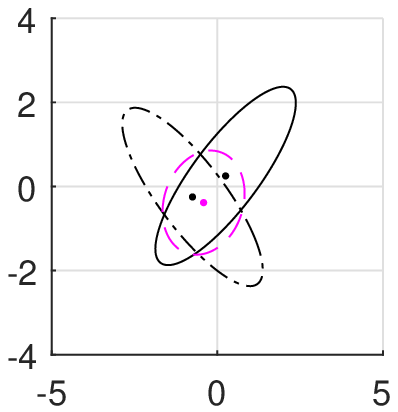}
    \hfill
    \includegraphics[width=0.31\linewidth]{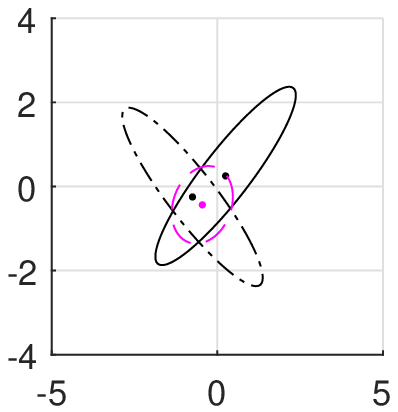}
  \end{minipage}
  \vspace{-4pt}
  }
  \caption[Optimal EMDs]{Optimally weighted EMDs of Gaussian localisation distributions obtained by using Algorithm~\ref{Algorithm:NewtonIterationsP3}: Here, $\rho_i$ (solid line)~and~$\rho_j$ (dash-dotted line) are inputs for $\kappa=1,10,20$ (left to right -- see Example~\ref{ex:GaussBernoulli} for details)~and~$\rho_{\omega^*}$ (magenta dashed line) are fused outputs with optimal weights found as $\omega^*= 0.500, 0.397$ and $0.387$ (left to right).}
  \label{fig:gaussSGCIs}
\end{figure}

\begin{figure}[t!]
  \centering{
  \begin{minipage}{\linewidth}
    \includegraphics[width=0.9\linewidth]{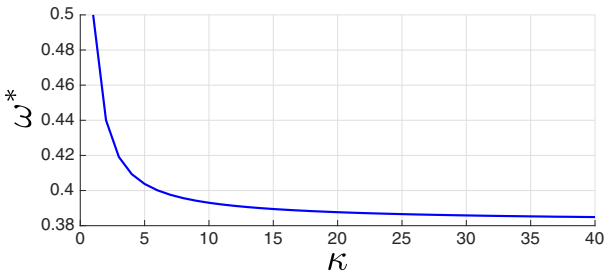}
  \end{minipage}
  \vspace{-1pt}
  }
  \caption[Optimal omega values]{Optimal weight parameters found using Algorithm~\ref{Algorithm:NewtonIterationsP3} as a function of the sensing diversity $\kappa$ as explained in detail in Example~\ref{ex:GaussBernoulli}.}
  \label{fig:kappa_vs_omega}
\end{figure}

\begin{figure}[b!]
  \centering{
  \begin{minipage}{\linewidth}
    \includegraphics[width=0.9\linewidth]{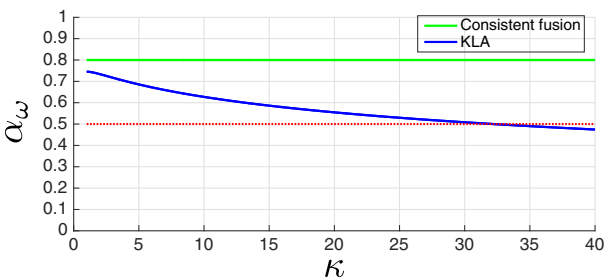}
  \end{minipage}
  \vspace{-4pt}
  }
  \caption[Fused existence probabilities]{Fused existence probabilities: Cardinality consistent fusion via Problem $\text{P4}$ (green line) in comparison with fused existence probabilities obtained using KL averaging (blue line--see, e.g.,\cite{Battistelli2013}--\cite{Li2018}) depicted as a function of sensing diversity parameter $\kappa$ as detailed in Example~\ref{ex:GaussBernoulli}. The red-dashed line is the canonical Bayesian decision threshold for deciding the existence of an object.}
  \label{fig:kappa_vs_alpha}
\end{figure}

Cardinality fusion problem $\text{P4}$ has an analytical solution for the case. Using \eqref{eqn:beroptomega} for $\alpha_i=\alpha_j=0.8$ (see Example~\ref{ex:GaussBernoulli}), we find that $\omega_C^*=0.5$ and $\alpha^*=0.8$ regardless of the solution of Problem $\text{P3}$ above, i.e., the optimal weight parameters $\omega^*$ of the localisation distributions or the corresponding normalisation constants $z_{\omega^*}$s. 

Let us compare this result with conventional EMD fusion with weights selected using ``Kullback-Leibler averaging'' (KL averaging) as used in, for example, \cite{Battistelli2013}--\cite{Li2018}. In this approach, the fused finite set density is the Bernoulli EMD with weight parameter $\omega = 1/2$. In other words, the fused result solves Problem $\text{P}$ for $\omega = 1/2$. The coupling of the fused existence probability with $z_\omega$ and the lack of a weight selection mechanism results with the fused existence probabilities given in \figurename~\ref{fig:kappa_vs_alpha} which illustrates a monotonically increasing disagreement with the input beliefs on the existence of an object as the sensing diversity  increases. This trend results with the fused existence probability falling below the canonical decision threshold of~$0.5$ \footnote{Note that this graph is nothing but the cross-section of the existence probability graph in \figurename~\ref{fig:gbcombo} along $\omega=0.5$.}. The proposed cardinality consistent fusion, on the other hand, preserves the confidence of input distributions on the existence of an object irrespective of the sensing diversity. The KL averaging fusion outputs a localisation density that is similarly the EMD of the localisation distributions with weight $\omega = 1/2$. This margin between this value and the optimum point found by the proposed algorithm grows significantly with $\kappa$, in this example (see \figurename~\ref{fig:kappa_vs_omega}).

Note that these results are also relevant for the work in literature on fusion of multi-Bernoulli~\cite{Yi2017}, and, labelled random finite set families as their fusion is often reduced to performing Bernoulli-Bernoulli fusion for multiple pairs using, for example, KL averaging~\cite{Li2018}.

\end{example}

\begin{example}[Example~\ref{sec:IIDClusterExample} revisited]
Let us demonstrate Algorithm~\ref{Algorithm:NewtonIterationsP4} in solving the cardinality fusion problem $\text{P4}$. First, we consider the binomial cardinality distribution pair illustrated in \figurename~\ref{fig:iidinconsistency}\subref{fig:BB}. Note that, the algorithm allows for exact computations in all steps. The termination threshold is selected as $\epsilon=1.0e-4$. In $2$ iterations Algorithm~\ref{Algorithm:NewtonIterationsP4} declares convergence to the optimal weight parameter $\omega_C^*= 0.5182$. The corresponding fused cardinality pmf is the EMD with this weight and depicted in \figurename~\ref{fig:cf}\subref{fig:cfA}. We repeat the same procedure for the cardinality pair in \figurename~\ref{fig:iidinconsistency}\subref{fig:BBfusionHigh}. The proposed algorithm convergences in $2$ steps to $\omega_C^*=0.5090$. The resulting cardinality distribution is depicted in \figurename~\ref{fig:cf}\subref{fig:cfB}. Note that the MAP estimates of the number of objects is in agreement with the inputs. The consistency here is underpinned by that the cardinality fusion here is independent of localisation densities and~$z_\omega$.

\begin{figure}[t!]
\vspace{-5pt}
  \centering{
  \subfloat[]{ \includegraphics[width=0.47\linewidth,height=37mm]{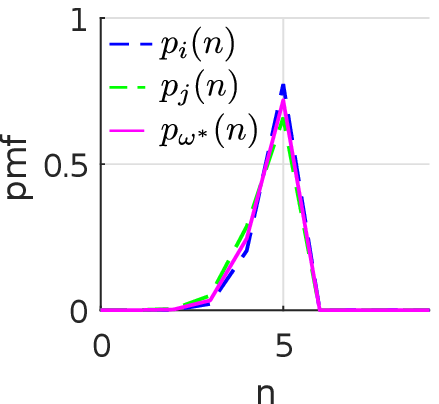} \label{fig:cfA} }\hfill
  \subfloat[]{ \includegraphics[width=0.47\linewidth,height=37mm]{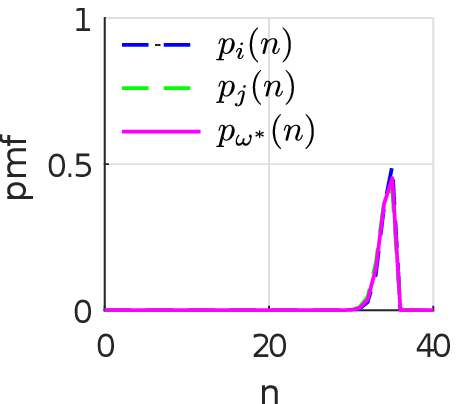} \label{fig:cfB} }
  }
  \caption[Consistent fusion of binomails]{Consistent cardinality fusion using Algorithm~\ref{Algorithm:NewtonIterationsP4}: Inputs (blue and green dashed lines) are the binomial pairs introduced in Example~\ref{ex:GaussBernoulli}. The optimal weights converged are $\omega_C^*= 0.5182$~and~$0.5090$ in \subref{fig:cfA}~and~\subref{fig:cfB}, respectively.}
  \label{fig:cf}
\end{figure}

In order to contrast this result with that obtained by KL averaging (see, e.g.,\cite{Battistelli2013}), let us consider the fused cardinalities depicted for $\omega=1/2$ in the top right panes in \figurename~\ref{fig:iidinconsistency}\subref{fig:BBfusion}~and~\subref{fig:BBfusionHigh} in which the coupling of cardinality fusion with $z_\omega$ is demonstrated. Let us remind also that the MAP object number estimate depends on $z_\omega$ (\figurename{s}~\ref{fig:iidinconsistency}\subref{fig:mapest1}~and~\subref{fig:mapBBHigh}) with an increasing bias towards underestimation as the input distribution peaks shift towards right indicating higher number of objects (\figurename{s}~\ref{fig:iidinconsistency}\subref{fig:mapBBHigh}). The proposed algorithm, on the other hand, finds a consistent common ground of the input cardinality distributions which is also optimal with respect to Problem $\text{P4}$.
\end{example}

\section{Conclusion}
\label{sec:Conclusion}
This work considered the recently growing literature on the use of EMDs --or, weighted geometric means- of finite set densities in multi-sensor fusion for multi-object tracking. EMDs of distributions are pointwise consistent, however, we have proved in this article that they are prone to inconsistency in their cardinality distributions which can lead to serious decision errors related to the number of objects sensed. We have demonstrated that pointwise consistency does not imply consistency in cardinality and vice versa. We remedy this problem by redefining the variational optimisation problem that underlies EMD fusion. Then, we specify iterative solutions and establish a conceptual framework for cardinality consistent fusion of finite set densities which also accommodates EMDs.

Following these results, possible future directions include investigation of numerical computational schemes in order to use within this variational framework. The extension of this variational perspective to accommodate $N$ sources is also anticipated to be a worthwhile direction to pursue. 

\appendix
\subsection{Solution of the Problem~$\text{P}$ in \eqref{eqn:centroid}}
\label{sec:statpointsproblemp}
In this appendix, we provide a direct proof for the assertion that the solution of Problem~$\text{P}$ in~\eqref{eqn:centroid} for any $\omega$ is the EMD given in~\eqref{eqn:emd},\eqref{eqn:emdscale} when $f$ is constrained to be a density, i.e., to integrate to unity. This constraint on the feasible set of solutions together with the cost functional of the problem are captured in the Lagrangian given by~\cite{Bazaraa2006}
\begin{eqnarray}
 {\cal L}[f,\lambda] &\triangleq& J_\omega[f] + \lambda G[f],
 \label{eqn:Lagrangian} \\ 
 G[f] &\triangleq& \left( 1-  \int f(X) \mathrm{d} X\right). \nonumber
\end{eqnarray}
Here,  $\lambda$ is a free variable referred to as a Lagrange multiplier which --at its stationary point-- imposes the constraint of integration to unity on those $f$ which are also stationary. This point together with the convexity of $J_\omega$ in $f$ results with the stationary point of \eqref{eqn:Lagrangian} $f^*$ being the solution to $\text{P}$ in~\eqref{eqn:centroid}~\cite{Bazaraa2006}.

The necessary (and sufficient) condition of stationarity that $f^*$ should satisfy is given by i) the functional derivative of the Lagrangian, i.e.,
\begin{equation}
 \bigg. \delta \left( J_\omega + \lambda G \right)[f; \delta_X ] \bigg\rvert_{f=f^*} = 0
 \label{eqn:condition1}
\end{equation}
for all $X$, and, ii) the partial differential with respect to the multiplier $\lambda$, i.e.,
\begin{equation}
  \bigg. \frac{\partial \left( J_\omega[f] + \lambda G[f] \right) }{\partial \lambda}  \bigg\rvert_{\lambda=\lambda^*} = 0.
 \label{eqn:condition2}
\end{equation}

The functional derivative in~\eqref{eqn:condition1} can be expressed in terms of the partial derivative with respect to $f(X)$, i.e.,
\begin{equation}
\delta \left( J_\omega + \lambda G \right)[f; \delta_X ] = \frac{\partial \left( J_\omega[f] + \lambda G[f] \right)}{\partial f(X)}  \notag
\end{equation}
for all $X$. By using the definition of KLD in~\eqref{eqn:KLD}, and rules of differentiation, this expression leads to 
\begin{multline}
 \frac{\partial \left( J_\omega[f] + \lambda G[f] \right)}{\partial f(X)} = \log f(X) + \frac{f(X)}{f(X)} \\ - (1-\omega) \log f_i(X) - \omega \log f_j(X) -\lambda.
 \nonumber
\end{multline}
The equation above is zero when the multiplier $\lambda$ takes the value
\begin{equation}
 \lambda = 1 + \log \frac{f(X)}{f^{(1-\omega)}_i(X)f^{\omega}_j(X) },
\end{equation}
for which the corresponding $f(X)$ is found as
\begin{equation}
 f(X) = \exp(\lambda-1)f^{(1-\omega)}_i(X)f^{\omega}_j(X).
 \label{eqn:foflambda}
\end{equation}

We substitute from the equality above into the partial differentiation in \eqref{eqn:condition2} and obtain
\begin{eqnarray}
&& \mspace{-50mu} \frac{\partial \left( J_\omega[f] + \lambda G[f] \right) }{\partial \lambda} \nonumber \\ 
&=&\frac{\partial}{\partial \lambda} \left( (\lambda-1)\exp(\lambda-1)\int f^{(1-\omega)}_i(X)f^{\omega}_j(X) \mathrm{d}X  \right) \notag\\
& &\,\,\,\,+ \frac{\partial}{\partial \lambda} \left(\lambda - \lambda \exp(\lambda-1) \int f^{(1-\omega)}_i(X)f^{\omega}_j(X) \mathrm{d}X   \right) \notag
\end{eqnarray}

\begin{eqnarray}
&=& \frac{\partial}{\partial \lambda} \left(\lambda -\exp(\lambda-1) \int f^{(1-\omega)}_i(X)f^{\omega}_j(X) \mathrm{d}X   \right) \notag \\
&=&1- \exp(\lambda-1)\int f^{(1-\omega)}_i(X)f^{\omega}_j(X) \mathrm{d}X.
\label{eqn:condition2b}
\end{eqnarray}

After \eqref{eqn:condition2b} is set to zero, $\lambda^*$ is found as
\begin{equation}
 \lambda^* = - \log \int f^{(1-\omega)}_i(X)f^{\omega}_j(X) \mathrm{d}X + 1
 \label{eqn:optmult}
\end{equation}
and, the solution $f^*$ is found by subsituting from \eqref{eqn:optmult} into \eqref{eqn:foflambda}~as
\begin{equation}
 f^*(X) = \exp\bigg( - \log{\int f^{(1-\omega)}_i(X)f^{\omega}_j(X)\mathrm{d}X} \bigg) f^{(1-\omega)}_i(X)f^{\omega}_j(X). \notag
\end{equation}
Following a rearrangement of the terms, the expression above takes the form given~by
\begin{multline}
 f^*(X)= \frac{1}{\int f^{(1-\omega)}_i(X)f^{\omega}_j(X)\mathrm{d}X}f^{(1-\omega)}_i(X)f^{\omega}_j(X),
 \label{eqn:P2solution}
\end{multline}
which can be identified as the EMD in~\eqref{eqn:emd},\eqref{eqn:emdscale}.
\vspace{-10pt}
\subsection{Proof of Proposition~\ref{prop:inconsistency}}
\label{sec:proof}
The proof follows from decomposing $p_\omega(n)$ in~\eqref{eqn:emdcard}~and~\eqref{eqn:Nomega} as follows
\begin{multline}
p_\omega(n) = \\
\frac{p_i^{(1-\omega)}(n)p_j^{\omega}(n)z_\omega(n)}{p_i^{(1-\omega)}(n)p_j^{\omega}(n)z_\omega(n)+ \sum_{n' \neq n} p_i^{(1-\omega)}(n')p_j^{\omega}(n') z_\omega(n') },
\end{multline}
and substituting on the left hand side of the inequality in~\eqref{eqn:inconsistency}. The inequality can easily be solved for $z_\omega(n)$ leading to~\eqref{eqn:condition}.

\subsection{Derivation of Algorithm~\ref{Algorithm:NewtonIterationsP4}}
\label{sec:DerivationAlg1}
We start by finding the first and second order derivatives of $G_n$ in~\eqref{eqn:Gn}:
\begin{eqnarray}
G^\prime_n(\omega_n) &\triangleq&  \frac{d G_n(\omega)}{d \omega} \bigg\rvert_{\omega = \omega_n} \nonumber \\
   &=& -\frac{ dz_{\omega}(n)/d\omega }{ z_{\omega}(n) } \bigg\rvert_{\omega = \omega_n}, 
 \label{eqn:Gndash}\\
 G^{\prime\prime}_n(\omega_n) &\triangleq&  \frac{d^2 G_n(\omega)}{d \omega^2}\bigg\rvert_{\omega = \omega_n} \nonumber \\
 &=& -\frac{   d^2z_{\omega}(n)/d\omega^2 \times z_\omega(n) - \left( dz_{\omega}(n)/d\omega \right)^2 }{ \left( z_\omega(n) \right)^2}\bigg\rvert_{\omega = \omega_n}.
 \label{eqn:Gnddash}
\end{eqnarray}
where $z_\omega(n)$ is given by~\eqref{eqn:zomega}. Newton iterations~\cite{Bazaraa1993} use recursive increments to the scalar argument of maximisation. These increments are found by evaluating the ratio of \eqref{eqn:Gndash} and~\eqref{eqn:Gnddash} which is found as
\begin{eqnarray}
\frac{ G^\prime_n(\omega_n)}{ G^{\prime\prime}_n(\omega_n) } &=& \frac{  z^\prime_\omega(n) z_\omega(n) }{  z^{\prime\prime}_\omega(n) z_\omega(n) - \left( z^\prime_\omega(n) \right)^2 }\bigg\rvert_{\omega = \omega_n}, \\
z^\prime_\omega(n)&\triangleq&   \frac{d z_\omega(n) }{ d\omega} \label{eqn:zomegaP}\\
z^{\prime\prime}_\omega(n)&\triangleq& \frac{ d z^\prime_\omega(n) }{ d \omega} \label{eqn:zomegePP}
\end{eqnarray}
in terms of $z_\omega(n)$ and its first and second order derivatives. 

Next, let us find the derivatives of $z_\omega(n)$. The first order derivative follows after substituting \eqref{eqn:zomega} in \eqref{eqn:zomegaP} as
\begin{eqnarray}
 \frac{d z_\omega(n) }{ d\omega} &= &\int \rho_{i,n}(x) \frac{d}{d\omega}\left( \frac{\rho_{j,n}(x)}{\rho_{i,n}(x)} \right)^\omega \mathrm{d}x \nonumber \\
 &=& \int \rho^{1-\omega}_{i,n}(x)\rho^\omega_{j,n}(x) \log \frac{\rho_{j,n}(x)}{\rho_{i,n}(x)} \mathrm{d}x \nonumber
\end{eqnarray}
which is equivalent to \eqref{eqn:zprime}. The second order derivative seeked is found by substituting from the above equality into \eqref{eqn:zomegePP} as 
\begin{eqnarray}
 \frac{ d z^\prime_\omega(n) }{ d \omega} &=& \int \rho_{i,n}(x) \log \frac{\rho_{j,n}(x)}{\rho_{i,n}(x)}\frac{d}{d\omega}\left( \frac{\rho_{j,n}(x)}{\rho_{i,n}(x)} \right)^\omega \mathrm{d}x \nonumber \\
& =&\int \rho^{1-\omega}_{i,n}(x)\rho^\omega_{j,n}(x) \bigg( \log \frac{\rho_{j,n}(x)}{\rho_{i,n}(x)} \bigg)^2 \mathrm{d}x\nonumber
\end{eqnarray}
which is equivalently given in \eqref{eqn:zprimeprime}.



%

\ifCLASSOPTIONcaptionsoff
 \newpage
\fi

\bibliographystyle{IEEEtran}

\vfill

\end{document}